\newcommand{\RR}{{\mathbb R}}
\newcommand{\CC}{{\mathbb C}}
\newcommand{\beq}{\begin{equation}}
\newcommand{\eeq}{\end{equation}}
\newcommand{\ba}{\begin{array}}
\newcommand{\ea}{\end{array}}
\newcommand{\bea}{\begin{eqnarray}}
\newcommand{\eea}{\end{eqnarray}}
\newtheorem{theorem}{Theorem} 
\newtheorem{lemma}{Lemma}[section]
\newtheorem{proposition}{Proposition}[section]
\newtheorem{definition}{Definition} [section]
\DeclareMathOperator{\sgn}{sgn}
\DeclareMathOperator{\res}{res}
\numberwithin{equation}{section}
\begin{document}

\begin{center}
{\large   \bf The direct scattering problem for perturbed Kadomtsev-Petviashvili multi line solitons} 
 
\vskip 15pt

{\large Derchyi Wu}

\vskip 5pt

{ Institute of Mathematics, Academia Sinica, 
Taipei, Taiwan}

e-mail: {\tt mawudc@gate.sinica.edu.tw}

\vskip 10pt

{\today}

\end{center}

\vskip 10pt
\begin{center}
{\bf Abstract}
\end{center}
\begin{enumerate}
\item[]{\small Regular Kadomtsev-Petviashvili II (KPII) line solitons have been investigated and classified successfully by the Grassmannians. The inverse scattering method provides a promising and powerful approach to study the stability properties of $\textrm{Gr}(N, M)_{> 0}$ KP solitons. In this paper, we complete rigorous analysis for the direct scattering problem of perturbed   $\textrm{Gr}(N, M)_{> 0}$ KP solitons}. 
\end{enumerate}

\section{Introduction}\label{S:motivation}
The   KPII   equation  
\beq\label{E:KPII-intro}
\begin{array}{c}
(-4u_{x_3}+u_{x_1x_1x_1}+6uu_{x_1})_{x_1}+3u_{{x_2}{x_2}}=0,   
\end{array}
 \eeq introduced by Kadomtsev and Petviashvili \cite{KP70}, is a two-spatial dimensional integrable generalization  of the Korteweg-de Vries (KdV) equation.  It is an asymptotic model, where $u=u(x)=u({x_1},{x_2},{x_3})$ represents the wave amplitude at the point $(x_1,x_2)$ for a fixed time $x_3$, for dispersive systems in the weakly nonlinear, long wave regime, when the wavelengths in the transverse direction are much larger than in the direction of propagation (see \cite{AS79} for a formal derivation and \cite{L03} for a rigorous one in the water wave context). 
 
 For $u(x)$ decaying at infinity,  the Cauchy problem or the well-posedness problem has been  studied intensively by many scientists via   the inverse scattering method  or  PDE techniques since late 1980's (see \cite{KS15} or \cite{Wu19} for references).
On the other hand, interesting features of the water wave can be reproduced by  the {\bf $\mathbf {\textrm{Gr}(N, M)_{\ge 0}}$ KP  solitons} which are regular KP solutions in the entire $x_1x_2$ - plane and with peaks localized and nondecaying along certain line segments and rays (see \cite{K17}, \cite{K18} for a history). They are  constructed as follows  \cite{KW13}, given  $\kappa_1<\cdots<\kappa_M$  and 
 $A=(a_{ij})\in\mathrm{Gr}(N, M)_{\ge 0}$ (full rank $N\times M$ matrices with non negative minors), 
\beq\label{E:line-tau}
\ba{c}
u_0(x)= 2\partial^2_{x_1}\ln\tau(x),
\ea
\eeq where the $\tau$-function is  the Wronskian determinant
\beq\label{E:line-grassmannian}
\begin{split}
\tau(x)=&\left|
\left(
\begin{array}{cccc}
a_{11} &a_{12} & \cdots & a_{1M}\\
\vdots & \vdots &\ddots &\vdots\\
a_{N1} &a_{N2} & \cdots & a_{NM}
\end{array}
\right)
\left(
\begin{array}{ccc}
E_{1} & \cdots & \kappa_1^{N-1}E_1\\
E_{2} & \cdots & \kappa_2^{N-1}E_2\\
\vdots & \ddots &\vdots\\
E_{M} & \cdots & \kappa_M^{N-1}E_M\\
\end{array}
\right)
\right|\\
=&\sum_{1\le j_1< \cdots< j_N\le M}\Delta_{j_1,\cdots,j_N}(A)E_{j_1,\cdots,j_N}(x),
\end{split}
\eeq where $E_j(x)=\exp\theta_j=\exp( \kappa_j x_1+\kappa_j^2 x_2+\kappa_j^3 x_3)$, the coefficients $\Delta_{j_1,\cdots,j_N}(A)$, called the Pl$\ddot{\textrm u}$cker coordinates, is the $N\times N$ minor of the matrix $A$ whose columns are labelled by the   index set $J=\{j_1< \cdots< j_N\}$, and the exponential term  $E_{j_1,\cdots,j_N}(x)$ is given by $Wr(E_{j_1},\cdots,E_{j_N})$, i.e.,
\beq\label{E:e-j1m}
E_J(x)=E_{j_1,\cdots,j_N}(x)=\Pi_{l<m}(\kappa_{j_m}- \kappa_{j_l})\exp\left( \sum_{n=1}^N\theta_{j_n}(x) \right).
\eeq   Note that the simplest   $  {\textrm{Gr}(1, 2)_{> 0}}$ KP  solitons, corresponding to $\kappa_1$, $\kappa_2$ , and $A=(1,a)$, $a>0$, are 
\beq\label{E:line-tau-oblique}
 u_0(x) 
=  \frac{(\kappa_1-\kappa_2)^2}2\textrm{sech}^2\frac{\theta_1(x)-\theta_2(x)-\ln a}2.
\eeq  
They are the $1$-line solitons discussed in literature. There has been important  
  progress in  combinatoric properties,  classification theory, and wave resonant theory of
  KPII line solitons \cite{K17}, \cite{K18} since 2000's. Different perspectives about $\mathbf {\textrm{Gr}(N, M)_{\ge 0}}$ KP  solitons, the connection to real finite gap KP solutions,   are  investigated in \cite{AG18a}, \cite{AG18b}, \cite{AG19}.

Our interest is  the stability problem of  $\mathbf {\textrm{Gr}(N, M)_{\ge 0}}$ KP  solitons.  The $H^s$-global well   posedness  of the KPII equation 
with initial data  $u_c(x_1,x_2)$ where $u_c(x_1-cx_3,x_2)$ is a KP solution has been   solved by Molinet-Saut-Tzvetkov \cite{MST11}. Their result shows that the deviation of the KPII solution from the initial data could evolve  exponentially.  Taking $
\kappa_1=-\kappa_2,\ A=(1, 1) 
$, Mizumachi establishes $L^2$-orbital stability  and $L^2$-instability theories by showing that the amplitude of the line soliton converges to that of the line soliton at
initial time whereas jumps of the local phase shift of the crest propagate in a finite speed toward $x_2 =\pm\infty$ \cite{M15}, \cite{M19}. 

An alternative approach to study the stability problem of $\textrm{Gr}(N, M)_{\ge 0}$ KP  solitons is the {\bf inverse scattering theory  (IST)} based on  the {\bf Lax pair}
\begin{equation}\label{E:KPII-lax-1}
\begin{split}
\left\{ 
{\begin{array}{l}
 (-\partial_{x_2}+\partial_{x_1}^2+u )\Phi(x,\lambda)=0,\\
 (-\partial_{x_3}+ \partial_{x_1}^3+\frac 32u\partial_{x_1}+\frac 34u_{x_1}+\frac 34\partial_{x_1}^{-1}u_{x_2}-\lambda^3   )\Phi (x,\lambda)=0 
\end{array}}
\right.
\end{split}
\end{equation}  
of the KPII equation, where $u(x)=u_0(x)+v_0(x)$ is a perturbation of the $\textrm{Gr}(N, M)_{\ge 0}$ KP  soliton $u_0(x)$ and $\Phi(x,\lambda)$ is called the Jost solution.
 Pioneering research on the IST with data nondecaying along a single line for KPII were derived by   \cite{BP301},  \cite{VA04}. For data being a perturbed $\mathbf {\textrm{Gr}(N, M)_{\ge 0}}$ KP  soliton,    Boiti-Pempenelli-Pogrebkov-Prinari introduce various methods, the Darboux transform \cite{P00}, \cite{BP301}, \cite{BP302},  the twisting transformations \cite{BP309}, \cite{BP310}, and the $\tau$-function formulation \cite{BP310}, \cite{BP214} to set  foundations for the IST for KPII. Most significant achievements include deriving   explicit formula of the Green function \cite{BP211}, \cite{BP212},    boundedness of $G_d$ (the discrete summand in the Green function) \cite{BP211}, \cite{BP212}, and deriving the $\mathcal D^\flat $-symmetry     of the eigenfunctions \cite[(4.38)]{BP214}.

In the previous works \cite{Wu18},  \cite{Wu19},  via a KdV theory approach  and a Sato theory approach,   the direct scattering problem  for a perturbed $\textrm{Gr}(1,2)_{> 0}$ KP soliton, $u(x)=u_0(x)+v_0(x)$,  is rigorously   completed  by establishing a $\lambda$-uniform estimate of the Green function $G$ of the heat operator $-\partial_{x_2}+\partial^2_{x_1}+2\lambda\partial_{x_1}+u _0(x)$, 
\beq\label{E:intro-green-eq}
| G(x,x',\lambda)|\le C (1+\frac1{\sqrt{|x_2-x_2'|}} ),\quad \textit{$C$  a constant}
\eeq the forward scattering transformation \cite[Definition 3.2]{Wu19}
\beq\label{E:scaatering}
\mathcal S(u(x))=\{0,\kappa_1,\kappa_2,s_d,s_c(\lambda)\},
\eeq  and a Cauchy integral equation (or a $\overline\partial$ equation) with controllable singularities, 
\begin{gather}
  {  m}(x, \lambda) =1+\frac{  m_{ \res }(x  )}{\lambda  }  +\mathcal CT
   m , \label{E:intro-cauchy-eq}\\
   \quad m\in W , \label{E:intro-cauchy-eq-new}
\end{gather}
(see \cite{Wu18},  \cite{Wu19} for definitions of $m_{\res}$, $\mathcal S$, $T$, $W$). We remark that, with the help of \eqref{E:intro-green-eq},   a Picard iteration can be utilized to solve the  eigenfunction  of the heat equation $(-\partial_{x_2}+\partial^2_{x_1}+2\lambda\partial_{x_1}+u  (x))m =0$. Besides,  the discrete scattering data $0$, $\kappa_j$ are determined by the Sato eigenfunction (see \eqref{E:sato}), the $\textrm{Gr}(1,2)_{> 0}$ KP soliton; and the continuous scattering data $s_c(\lambda)$ is a nonlinear Fourier transform of the initial perturbation $v_0(x)$. Finally, \eqref{E:intro-cauchy-eq} and  \eqref{E:intro-cauchy-eq-new}, a dynamic reformulation of the forward scattering transformation \eqref{E:scaatering},  serve as an analogue of  the Gelfand-Levitan-Marchenko equation for the KdV equation in solving the inverse problem.

The goal of this paper is
 to  demonstrate there is a strong analogy between rigorous analysis of the direct problem  for perturbed $\textrm{Gr}(1,2)_{> 0}$   and   perturbed  $\textrm{Gr}(N, M)_{> 0}$ KP solitons. The key observation is that, based on a deep Sato theory, the kernel, defining the Green functions for a $\textrm{Gr}(N,M)_{> 0}$ KP soliton, is found to share  similar singular structures as that for a $\textrm{Gr}(1,2)_{> 0}$ KP soliton.
  Hence one can adapt techniques  in \cite{Wu19} to derive corresponding  \eqref{E:intro-green-eq} and \eqref{E:intro-cauchy-eq} for perturbed   $\textrm{Gr}(N, M)_{> 0}$ KP solitons. In defining corresponding forward scattering transformation $\mathcal S$ and eigenfunction space $W$, difficulties occur in search for a notion generalizing the norming constant  $s_d$ in \eqref{E:scaatering}. This obstruction has been removed by the $\mathcal D^\flat$-symmetry introduced in \cite[(4.38)]{BP214}. Heavily building on the $\mathcal D^\flat$-symmetry, and taking it in directions for being linearisable and compatible with a Cauchy integrable equation possessing controllable singularities,  we generalize norming constants and define corresponding $\mathcal S$, $W$.

The contents of the paper are as follows. In Section \ref{S:eigen}, we justify the unique solvability of an associated heat equation with proper boundary value (see \eqref{E:Lax-bdry}). Precisely, a simplified argument of Boiti et al \cite{BP214} to derive an explicit  form of the Green function will be provided first. Then we will prove a $\lambda$-uniform estimate for the Green function, Proposition \ref{P:eigen-green},  which yields the existence of the  eigenfunction $m(x,\lambda)$.
 
  In Section \ref{S:sd}, with the help of the $\mathcal D^\flat$-symmetry, we shall characterize analytic and algebraic constraints of the  eigenfunction $m(x,\lambda)$. To formulate a Cauchy integral equation with controllable singularities, a special renormalization,   $\widetilde m(x,\lambda)$,  is introduced to resolve the multiplicity of poles property of $m(x,\lambda)$ caused by the Sato eigenfunction in \eqref{E:Lax-bdry}. Then we extract the scattering data $\widetilde s_c$, $\widetilde {\mathcal D}$, define the forward scattering transform $\mathcal S$, and prove that the scattering data can linearize the  Kadomtsev-Petviashvili equation \eqref{E:KPII-intro}.

In Section  \ref{S:CIO}, we define the eigenfunction space $W$ based on properties of $\widetilde m$. Then, for $x$ fixed, after providing important $L^\infty$ estimates on $\mathcal CT\widetilde m$ (Theorem \ref{T:bdd}), we will derive a Cauchy integral equation with controllable singularities for $\widetilde m(x,\lambda)$ (Theorem \ref{T:cauchy-integral-eq}). Together with the algebraic constraints in $W$, a closeness property is addressed at the end of the paper. 

 
Throughout this paper, we set  $x_3=0$ and $x=(x_1,x_2)$ unless special mention and denote $C$ various uniform constants which are independent of $x$, $\lambda$. For convenience, we provide a table of notations here.
\vskip.5in
{\tiny\begin{center}
{\bf \large   Notation and Terminology}
\vskip.1in
\begin{tabular}{|c|c|c|}
   \hline
\bf{Notation}& \bf{Explanation} & \bf{Examples }  \cr
& & \bf{in text}\cr
   \hline\hline
   
 $\tau(x),\,\kappa_j,\,A,\,\theta_j(x), $&definition of the tau function; & \eqref{E:line-tau}-\eqref{E:e-j1m}\cr
$J,\,\Delta_J(A),\,E_J(x)$;& & \cr
$\mathbf {\textrm{Gr}(N, M)_{\ge 0}}$ KPII solitons  &regular multi-line soliton &\cr
\hline

$ u(x)= u_0(x)+ v_0(x)$  & $\qquad $ initial data with $u_0$ a $\mathbf {\textrm{Gr}(N, M)_{\ge 0}}$ KPII soliton $\quad\ $ &Theorem \ref{E:exitence-spectral}-\ref{T:cauchy-integral-eq}  \cr 
& and $v_0$ a perturbation&\cr \hline
$ \mathcal L; L$  & the heat operator associated to $u_0$    &\eqref{E;spectral} \cr\hline
$\ba{c }\mathcal G(x,x',\lambda),\,  G (x,x',\lambda);  \ea$ & Green  functions; & Definition \ref{D:green};    \cr 
$\ba{c} \mathcal G_c(x,x',\lambda),\,\mathcal G_d(x,x',\lambda),\\
    G_c(x,x',\lambda),\,G_d(x,x',\lambda);\ea$ &${\ba{c} \textrm{continuous Green functions,}\\ \textrm{discrete Green  functions}\ea}$
     & \eqref{E:green-heat};   \cr 
 \hline

$\theta(s),\,\delta(s)$  & $\qquad$ Heaviside function, Dirac function $\qquad$  & Lemma \ref{L:green-heat}, \cr
   &&Definition \ref{D:green} \cr \hline

\end{tabular}
\end{center} }

{\tiny\begin{center}
\begin{tabular}{|c|c|c|}
\hline   
$\varphi(x,\lambda),\,\chi(x,\lambda);$ & Sato eigenfunction, Sato normalized eigen- &\eqref{E:sato}; \cr
$\varphi_j(x),\,\chi_j(x)$  &function;  values of Sato  eigenfunction  at $\kappa_j$, & \eqref{E:residue-eigenfunction}   \cr
   & values of Sato normalized eigenfunction  at $\kappa_j$&   \cr\hline 
  
$\psi(x,\lambda),\,\xi(x,\lambda);$ & Sato adjoint eigenfunction, Sato normalized adjoint &\eqref{E:pole}; \cr
$\psi_j(x),\,\xi_j(x)$  & eigenfunction;  the residue of Sato adjoint eigenfunction & \eqref{E:residue-eigenfunction}  \cr
   & at $\kappa_j$, the residue of normalized  adjoint  eigenfunction at $\kappa_j$ &   \cr\hline  
   $\Phi(x,\lambda),\,m(x,\lambda)$; &  Jost solution, eigenfunction; renormalized   & \eqref{E:Lax-bdry}, \eqref{E:renormal}; \cr
   $\widetilde\Phi(x,\lambda),\,\widetilde m(x,\lambda)$; & eigenfunction, $\widetilde\Phi(x,\lambda)=e^{\lambda x_1+\lambda^2x_2}\widetilde m(x,\lambda)$;  & Definition \ref{D:reg-m};    \cr
$\kappa_j^+$; $\widetilde\Phi_j$; $\widetilde\Phi(x,\kappa)$ &$\ $ $\kappa_j^+=\kappa_j+0^+$; $\widetilde\Phi_j=\widetilde\Phi(x,\kappa_j^+)$; $\widetilde\Phi(x,\kappa)=(\widetilde\Phi_1,\cdots,\widetilde\Phi_M)$ $\ $ & \eqref{E:m-sym-cond}; \eqref{E:reg-m-sym}\cr
    \hline 
$\mathcal D,\,\mathcal D';$ $\mathcal D^\flat;\,\widetilde{\mathcal D} $&  symmetries associated to Sato eigenfunction, &Definition \ref{D:dd'}\cr
        & and to  Sato adjoint eigenfunction; symmetries & \eqref{E:m-sym};\cr
      &associated to $\Phi$; symmetries associated to $\widetilde\Phi$&\eqref{E:sharp}, \eqref{E:reg-m-sym} \cr\hline 
      
$ G\ast f $ & convolution operator & Definition \ref{D:con} \cr \hline
 $m_{\res,n}(x),\,m_{0,r}(x,\lambda)$;& coefficients of the Laurent series of $m$   at $\lambda=0$;  & \eqref{E:pole-m} \cr
 
 $\widetilde m_{z_n,\res}(x),\,\widetilde m_{z_n,r}(x,\lambda)$&    residue and remainder of $\widetilde m$    at $\lambda=z_n$  & \eqref{E:define-remainder-m-sf-pm2} \cr
 \hline   
   
$\quad$ $D_z,\,D_z^\times,\, D_{z,r},\,D_{z,r}^\times$;$\quad$ & $\quad$$\quad$ disks and punctured disks at $z$;  characteristic $\quad$$\quad$& Definition \ref{D:terminology} \cr
$E_z(\lambda),\,E_A( \lambda)$; &    functions of $D_z$, and of a     non single point&   \cr
$ a$; $\tilde a$; & set $A$;  $ a=\frac 12\min_{j\ne j'}\{|\kappa_j|, \,|\kappa_j-\kappa_{j'}|\}$; $\tilde a= a/N$;  &   \cr 
$\widetilde D_z,\,\widetilde D_z^\times$; & disks and punctured disks at $z$ with radius $\tilde\kappa$   & Definition \ref{D:reg-m} \cr\hline      
$\mathfrak G_j(x,x'); $&   leading  continuous of $G$ at $\lambda=\kappa_j$; & \eqref{E:g-asym-pm-i-prep-limit-sym} \cr 
 $  \Theta_j(x),\,\gamma_j$& factors of leading discontinuities of $m$ at $\lambda=\kappa_j$ & \eqref{E:discon} \cr \hline
$  s_c(\lambda), \tilde s_c(\lambda);\ z_n;   $& continuous scattering data; poles of $\widetilde m$; & \eqref{E:conti-sd}, \eqref{E:tilde-dbar-m-sf-new-0};   \cr
 $T$, $\mathcal S$  & continuous scattering operator,  & Definition \ref{D:reg-m};    \cr
 &  forward scattering transform  &  Definition \ref{D:spectral}\cr   \hline 
   
 $\mathcal C=\mathcal C_\lambda$ & Cauchy integral operator &  Definition \ref{D:cauchy}\cr 
 \hline   
 
 $W$ & the eigenfunction space  & Definition \ref{D:quadrature-hat}  \cr  
\hline

\end{tabular}
\end{center} }

{\bf Acknowledgments}. We feel deeply indebted to A. Pogrebkov and Y. Kodama   for introducing  the  Sato theory of the  KP hierarchy.  Special thanks need to be expressed to A. Pogrebkov for an illuminating explanation on the boundedness of $G_d$.  We would like to pay  respects to the fundamental contribution on  the inverse scattering theory done by Boiti, Pempinelli, Pogrebkov, Prinari. This research project was  partially supported by NSC 107-2115-M-001 -002 -.

\section{Direct Problem : the eigenfunction} \label{S:eigen}

The direct scattering problem starts with finding the Jost solution solving an associated heat equation with proper boundary value (see \eqref{E:Lax-bdry}). In this section, we shall simplify the argument of Boiti et al \cite{BP214} to derive an explicit  form of the Green function first. Then we provide a $\lambda$-uniform estimate for the Green function which yields the existence of the Jost solution.
 
Let $u(x)=u_0(x)+v_0(x)$, $u_0(x)=2\partial^2_{x_1}\ln\tau(x)$ be a $\textrm{Gr}(N, M)_{\ge  0}$ KP soliton defined by the data $\kappa_1<\cdots<\kappa_M$,  
 $A=(a_{ij})\in\mathrm{Gr}(N, M)_{\ge 0}$ through \eqref{E:line-tau}, \eqref{E:line-grassmannian}.  Consider the boundary value problem
\beq\label{E:Lax-bdry}
\left\{
{\ba{ll}
 (-\partial_{x_2}+\partial_{x_1}^2 
+u(x))\Phi(x ,\lambda)=0,& \\
\lim_{|x|\to\infty}e^{-(\lambda x_1+\lambda^2x_2)}(\Phi(x,\lambda)-\varphi(x,\lambda))=0,
\ea}
\right. 
\eeq  for any fixed  $ \lambda\ne 0$, $\lambda_R\notin  \{   \kappa_1,\cdots, \kappa_M  \}$, where 
\beq\label{E:sato}
\begin{split}
&\varphi(x,\lambda)\\
=&e^{\lambda x_1+\lambda^2 x_2}\frac{\sum_{1\le j_1< \cdots< j_N\le M}\Delta_{j_1,\cdots,j_N}(A) (1-\frac{\kappa_{j_1}}\lambda)\cdots(1-\frac{\kappa_{j_N}}\lambda)E_{j_1,\cdots,j_N}(x)}{\tau(x)} \\
\equiv&e^{\lambda x_1+\lambda^2 x_2}\chi(x,\lambda)
\end{split}
  \eeq  is the {\bf Sato eigenfunction} and $\chi(x,\lambda)$ is the {\bf Sato normalized eigenfunction} \cite[(2.12)]{BP214}, \cite[Theorem 6.3.8., (6.3.13) ]{D91}, \cite[Proposition 2.2, (2.21)]{K17} satisfying 
\beq\label{E;spectral}
\begin{split}
&\mathcal L\varphi(x,\lambda)\equiv \left(-\partial_{x_2}+\partial^2_{x_1}+u_0(x)\right)\varphi(x,\lambda)=0,\\
&L\chi(x,\lambda)\equiv \left(-\partial_{x_2}+\partial^2_{x_1}+2\lambda\partial_{x_1}+u_0(x)\right)\chi(x,\lambda)=0.
\end{split}
\eeq   
Renormalizing  $\Phi(x,\lambda)=e^{\lambda x_1+\lambda^2x_2} m(x,\lambda)$,  the boundary value problem \eqref{E:Lax-bdry} turns  into
\beq\label{E:renormal}
\left\{
{\ba{l}
  Lm(x,\lambda)=-v_0(x)m(x,\lambda), \\
 \lim_{|x|\to\infty}(m(x,\lambda) -\chi(x,\lambda))=0 ,
\ea}
\right. 
\eeq for any fixed  $ \lambda\ne 0$, $\lambda_R\notin  \{   \kappa_1,\cdots, \kappa_M  \}$.

\begin{definition}\label{D:green}
Define the Green functions, associated to a $\textrm{Gr}(N, M)_{\ge  0}$ KP soliton $u_0(x)$, by $\mathcal G(x,x', \lambda)$ and $G(x,x', \lambda)$ satisfying
\beq\label{E:sym-0}
\begin{gathered}
\mathcal L\mathcal G(x,x', \lambda) =\delta(x-x') ,\ \
L  G(x,x', \lambda)=\delta(x-x'),\\ 
\mathcal G(x,x', \lambda) =e^{ \lambda (x_1-x_1')+\lambda^2 (x_2-x_2') } G(x,x', \lambda).
\end{gathered}
\eeq Here $\delta(x)$ is the dirac function at $x=0$.
\end{definition}
Following the approach of Boiti et al \cite{BP214}, the key ingredient to derive the Green function is  an orthogonality relation between Sato  eigenfunctions  and {\bf Sato   adjoint eigenfunctions}, defined by
\beq\label{E:pole}
\begin{split}
&\psi(x,\lambda)\\
=&e^{-(\lambda x_1+\lambda^2 x_2)}\frac{\sum_{1\le j_1< \cdots< j_N\le M}\Delta_{j_1,\cdots,j_N}(A)\frac {E_{j_1,\cdots,j_N}(x)}{(1-\frac{\kappa_{j_1}}\lambda)\cdots(1-\frac{\kappa_{j_N}}\lambda)}}{\tau(x)} \\
\equiv & e^{-(\lambda x_1+\lambda^2 x_2)}\xi(x,\lambda). 
\end{split}
\eeq   Here $\xi(x,\lambda)$ is called the {\bf Sato normalized adjoint eigenfunction} \cite[(2.12)]{BP214}, \cite[Theorem 6.3.8., (6.3.13) ]{D91}.   Note  
\beq\label{E;spectral-new}
\begin{split}
&\mathcal L^\dagger\psi(x,\lambda)\equiv\left(\partial_{x_2}+\partial^2_{x_1}+u_0(x)\right)\psi(x,\lambda)=0,\\
&L^\dagger\xi(x,\lambda)\equiv \left(\partial_{x_2}+\partial^2_{x_1}-2\lambda\partial_{x_1}+u_0(x)\right)\xi(x,\lambda)=0.
\end{split}
\eeq Moreover, for $\forall x\in\RR^2$ fixed, $\chi(x,\cdot)$   is a rational function  normalized at $ \infty$ and with a  pole at $ 0$ of multiplicity $N$; and $\xi(x,\cdot)$ is a rational function  normalized at $ \infty$  with a  zero at $0$ of multiplicity $N$, and simple poles at $\kappa_1$, $\cdots$, $\kappa_M$. Therefore, values of $\varphi$ and residues of $\psi$ at $\kappa_j$ are significant in deriving the orthogonality relation. Let
\beq\label{E:residue-eigenfunction}
\begin{gathered}
\varphi_j(x)=\varphi(x,\kappa _j)=e^{\kappa_j x_1+\kappa_j^2x_2} \chi_j(x),
\\
\psi_j(x)=\textit{res}_{\lambda=\kappa_j}\psi(x,\lambda)=e^{-(\kappa_j x_1+\kappa_j^2x_2)}\xi_j(x),
\end{gathered}
\eeq and
\beq\label{E:residue-eigenfunction-M}
\begin{split}
&\varphi(x,\kappa)= (\varphi_1(x), \cdots, \varphi_M  (x)),\\
&\psi(x,\kappa)= (\psi_1(x), \cdots, \psi_M  (x)).
\end{split}
\eeq

\begin{definition}\label{D:dd'} For a $Gr(N, M)_{\ge 0}$ KP soliton $u_0(x)$, write  
\beq\label{E:D}
 A = \left(\ba{c} 
I_N,\ d \ea \right)\pi,
\eeq where $\pi$ is an $M\times M$ permutation matrix and $d$ is an $N\times (M-N)$ matrix. 
 Define
\beq\label{E:D'}
\begin{split}
&\mathcal D= \textrm{diag}\,(   
\kappa^N_1 ,\cdots,\kappa^N_M )\, A^T ,\\
&\mathcal D' = \left(\ba{c}-d^T,\ I_{M-N}\ea\right)\,\pi\,\textrm{diag}\,(   
\kappa^{-N}_1 ,\cdots,\kappa^{-N}_M  ).
\end{split}
\eeq
\end{definition} In \cite{BP310}, Boiti  et al prove 
the following   $\mathcal D$, $\mathcal D'$ symmetries about the Sato eigenfunction and the   Sato adjoint eigenfunction for $Gr(N, M)_{\ge 0}$ KP solitons and  an orthogonality relation between $\mathcal D$ and $\mathcal D'$. They yield an explicit form of the Green function and will induce full symmetries of $m(x,\lambda)$ (see \eqref{E:m-sym-cond}). In the following lemma we provide a direct and simplified proof via an approach suggested by Yuji Kodama. 
\begin{lemma}\label{L:BP3-symmetry} \cite{BP310} Define $\mathcal D$ and $\mathcal D'$ by Definition \ref{D:dd'}. Then
\beq\label{E:D-sym}
\mathcal D'\mathcal D=0,
\ \ 
\varphi(x,\kappa)\mathcal D=0,
\ \ 
\mathcal D'\psi(x,\kappa)^T=0.
\eeq 

\end{lemma}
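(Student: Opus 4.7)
The plan is to prove the three equalities in \eqref{E:D-sym} independently, beginning with the algebraic one.

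The identity $\mathcal{D}'\mathcal{D}=0$ is a one-line matrix computation: the diagonal factors $\mathrm{diag}(\kappa_j^{\pm N})$ cancel, $\pi\pi^T=I_M$, and the product reduces to $(-d^T, I_{M-N})(I_N, d)^T = -d^T+d^T=0$. The same computation simultaneously yields $\tilde A\,A^T=0$ with $\tilde A := (-d^T, I_{M-N})\pi$, which I will reuse below.

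For $\varphi(x,\kappa)\mathcal{D}=0$, I plan to use the Wronskian representation of the Sato eigenfunction (the approach credited to Kodama in the introduction). Setting $f_i(x)=\sum_j a_{ij}E_j(x)$, multilinearity of the $x_1$-Wronskian together with the Vandermonde evaluation of $Wr(E_{j_1},\ldots,E_{j_N},e^{\lambda x_1+\lambda^2 x_2})$ gives
\[
\varphi(x,\lambda) \;=\; \frac{Wr_{x_1}\!\bigl(f_1,\ldots,f_N,\,e^{\lambda x_1+\lambda^2 x_2}\bigr)}{\lambda^N\,\tau(x)}.
\]
Setting $\lambda=\kappa_j$ replaces the last slot by $E_j$, so $\kappa_j^N\varphi_j=\tau^{-1}Wr(f_1,\ldots,f_N,E_j)$. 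The $\alpha$-th entry of $\varphi(x,\kappa)\mathcal{D}$ is then $\tau^{-1}Wr\bigl(f_1,\ldots,f_N,\sum_j a_{\alpha j}E_j\bigr) = \tau^{-1}Wr(f_1,\ldots,f_N,f_\alpha)$, which vanishes because the Wronskian has a repeated column.

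The third identity $\mathcal{D}'\psi(x,\kappa)^T=0$ is the main obstacle: the naive dual Wronskian does not reproduce $\psi$, so the argument above does not transfer. My plan is a direct residue computation. Partial-fraction expansion of $\xi$ and cancellation of Vandermonde-type factors yield
\[
\kappa_j^{-N}\psi_j(x) \;=\; \frac{1}{\tau(x)}\sum_{J\ni j}(-1)^{N-k(J,j)}\,\Delta_J(A)\,E_{J\setminus\{j\}}(x),
\]
where $k(J,j)$ is the position of $j$ in the sorted index set $J$. I substitute this into $\sum_j\tilde A_{\alpha j}\kappa_j^{-N}\psi_j$, swap the order of summation so as to range first over $(N{-}1)$-subsets $K$ and then over $j\notin K$ with $J=K\cup\{j\}$, and apply Laplace cofactor expansion to $\Delta_{K\cup\{j\}}(A)$ along the column labelled $j$. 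What remains is an inner factor $\sum_{j\notin K}\tilde A_{\alpha j}a_{ij}$; writing it as $(\tilde A\,A^T)_{\alpha i}-\sum_{j\in K}\tilde A_{\alpha j}a_{ij}$, the first piece vanishes by the identity $\tilde A\,A^T=0$ established above, while in the second, $\sum_i(-1)^{i} a_{ij}\Delta_K^{(-i)}(A)$ is (up to sign) the determinant of the $N\times N$ submatrix of $A$ with column set $K\cup\{j\}$, a matrix whose appended column coincides with a column already present whenever $j\in K$, hence zero. The only delicate step is the sign bookkeeping when extracting the reduced monomials $E_{J\setminus\{j\}}$ and when applying Laplace expansion; once those signs are aligned, the algebraic collapse via $\tilde A\,A^T=0$ is automatic.
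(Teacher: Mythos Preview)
Your argument is correct. For $\mathcal D'\mathcal D=0$ and $\varphi(x,\kappa)\mathcal D=0$ you are doing essentially what the paper does: the paper writes out the Pl\"ucker expansion and invokes the vanishing of an $(N{+}1)\times(N{+}1)$ determinant with a repeated row, which is exactly your ``Wronskian with a repeated entry $f_\alpha$'' observation, just unpacked coordinatewise.

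For $\mathcal D'\psi(x,\kappa)^T=0$ the two proofs diverge. The paper brings in the Pl\"ucker-coordinate duality $\Delta_I(B)=\sigma(J,I)\Delta_J(A)$ for the complementary matrix $B=(-d^T,I_{M-N})\pi$ (quoted from \cite{K18}), rewrites the residue sum entirely in terms of the $(M{-}N)\times(M{-}N)$ minors $\Delta_I(B)$, and then reruns the same ``determinant with a repeated row'' Pl\"ucker argument used for $\varphi$. You instead stay on the $A$-side: after isolating the $(N{-}1)$-subset $K$ you Laplace-expand $\Delta_{K\cup\{j\}}(A)$ along the $j$-column and kill the inner sum with $\tilde A A^T=0$ (which is $\mathcal D'\mathcal D=0$ in disguise) and a repeated-column determinant. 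Your route is more self-contained---it avoids importing the duality lemma---while the paper's route is more structurally symmetric, reducing the $\psi$-case to a mirror of the $\varphi$-case. Both buy the same conclusion with comparable effort once the sign bookkeeping is done.
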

\begin{proof}

  From Definition \ref{D:dd'},
\begin{align*}
 \mathcal D'\mathcal D 
=&\left(\ba{c}-d^T,\ I_{M-N}\ea\right)\,\pi\,\textrm{diag}\,(   
\kappa^{-N}_1 ,\cdots,\kappa^{-N}_M  )
\textrm{diag}(   
\kappa^N_1 ,\cdots,\kappa^N_M )\pi^T \left(\ba{c} 
I_N\\d^T\ea \right) \\
=&\left(\ba{c}-d^T,\ I_{M-N}\ea\right) 
\left(\ba{c} 
I_N\\d^T\ea \right)  =0.
\end{align*}

To prove  $\varphi\mathcal D=0$, writing
\[
A =\left(
\begin{array}{cccc}
a_{11} &a_{12} & \cdots & a_{1M}\\
\vdots & \vdots &\ddots &\vdots\\
a_{N1} &a_{N2} & \cdots & a_{NM}
\end{array}
\right), 
\]and using \eqref{E:e-j1m}, for $1\le n\le N$, we obtain  
\begin{align*}
&\textit{the $n$-th column of }\varphi(x,\kappa)\mathcal D\\
=&\sum_{m=1}^M\varphi_m\kappa_m^N a_{nm}\\
=&\frac 1{\tau(x)} \sum_{m=1}^M a_{nm} \sum_{ 
  j_1, \cdots, j_N   } \Pi_{k=1}^N(\kappa_m-\kappa_{j_k})\Pi_{\beta<\alpha}(\kappa_{j_\alpha}-\kappa_{j_\beta})\Delta_{j_1,\cdots,j_N}(A)   e^{\theta_{j_1}  +\cdots+\theta_{j_N}+\theta_m} \\
=&\frac {(-1)^{N-1}}{\tau(x)}  \sum_{k=1}^{N+1}\sum_{ 
  j'_1, \cdots, j'_{N+1}   }  \Pi_{\beta<\alpha}(\kappa_{j'_\alpha}-\kappa_{j'_\beta})\Delta_{j'_1,\cdots,\check j'_k,\cdots,j'_{N+1}}(A)  (-1)^{k} a_{nj'_k} e^{\theta_{j'_1}  +\cdots+\theta_{j'_{N+1}}}\\
  =&\frac {(-1)^{N-1}}{\tau(x)}  \sum_{ 
  j'_1, \cdots, j'_{N+1}   } E_{j_1',\cdots,j_{N+1}'}\left[\sum_{k=1}^{N+1}\Delta_{j'_1,\cdots,\check j'_k,\cdots,j'_{N+1}}(A)  (-1)^{k} a_{nj'_k}\right]=0.
\end{align*} 
Here $(i_1,\cdots,\check i_{k}  ,   \cdots,i_N)\equiv (i_1,\cdots,i_{k-1}  , i_{k+1}  ,\cdots,i_N)$ and for the last equality, we have used the following Pl$\ddot{\textrm u}$cker relation
\begin{align*}
0=&\left|\ba{lclcll}
a_{1j_1} & \cdots & a_{nj_1} &\cdots & a_{Nj_1}   & a_{nj_1}\\
a_{1j_2} & \cdots & a_{nj_2} &\cdots & a_{Nj_2}   & a_{nj_2}\\
\vdots & \ddots & \vdots &\ddots & \vdots  & \vdots\\
a_{1j_{N+1}} & \cdots & a_{nj_{N+1}} &\cdots & a_{Nj_{N+1}}   & a_{nj_{N+1}}
\ea\right|\\
=&(-1)^{N-1}\sum_{k=1}^{N+1}\Delta_{j'_1\cdots\check{j'_k}\cdots j'_{N+1}}(A)(-1)^ka_{nj'_k}.
\end{align*}Hence the $\mathcal D$-symmetry of the Sato eigenfunction  $\varphi \mathcal D=0$ is verified. 

For the $\mathcal D'$-symmetry,  write 
\[
B\equiv\left(\ba{c}-d^T, I_{M-N}\ea\right)\pi=\left(\ba{cccc} b_{11} &b_{12} &\cdots &b_{1M}\\
\vdots &\vdots &\ddots &\vdots\\
 b_{(M-N),1}  & b_{(M-N),2}  &\cdots &b_{(M-N),M}\ea\right),
 \] then  
\beq\label{E:plucker-coord-duality}
\begin{gathered}
\Delta_I(B)=\sigma(J,I)\Delta_J(A),\\
I=\{i_1<i_2<\cdots<i_{M-N}\},\, J=\{j_1<j_2<\cdots<j_N\},\\ I\cup J=\{1,\cdots,M\},\ \
\sigma(J,I)=(-1)^{\frac {N(N+1)}2+j_1+\cdots+j_N} 
\end{gathered}
\eeq \cite[\S 4.4, pp 100-101]{K18}. Thus, for $1\le n\le M-N$,
\begin{align}
&\textit{the $n$-th row of }\mathcal D'\psi^T(x,\kappa)\label{E:plucker-adj}\\
=&\sum_{m=1}^M\psi_m\kappa_m^{-N} b_{nm}\nonumber\\
=&\frac 1{\tau(x)} \sum_{m=1}^M b_{nm} \sum_{ 
 m=j_k\in\{ j_1, \cdots, j_N\}   }(-1)^{N-k}\Delta_{j_1,\cdots,j_N}(A)\nonumber\\
 \times& \Pi_{j_1, \cdots,\check m,\cdots, j_N;\,\beta<\alpha}(\kappa_{j_\alpha}-\kappa_{j_\beta})   e^{\theta_{j_1}  +\cdots+\check{\theta}_m+\cdots+\theta_{j_N} }\nonumber\\
=&\frac { (-1)^{\frac{N(N+1)}{2}}}{\tau(x)}  \sum_{ 
  j'_1, \cdots, j'_{N-1}   } (-1)^{ j'_1+\cdots+j'_{N-1}} \Pi_{\beta<\alpha}(\kappa_{j'_\alpha}-\kappa_{j'_\beta})e^{\theta_{j'_1}  +\cdots+\theta_{j'_{N-1}}}\nonumber\\
   \times&\sum_{k=1}^{M-N+1}\Delta_{i'_1,\cdots,  \check i'_k,\cdots,i'_{M-N+1}}(B)  (-1)^{k} b_{ni'_k} =0.\nonumber
\end{align} Here $I'=\{i'_1 <\cdots<i'_{M-N+1}\}$, $ J'=\{j'_1 <\cdots<j'_{N-1}\}$, $I'\cup J'=\{1,\cdots,M\}$, and we have used the Pl$\ddot{\textrm u}$cker relation
\begin{align*}
0=&\left|\ba{lclcll}
b_{1i'_1} & \cdots & b_{ni'_1} &\cdots & b_{M-N,i'_1}   & b_{ni'_1}\\
b_{1i'_2} & \cdots & b_{ni'_2} &\cdots & b_{M-N,i'_2}   & b_{ni'_2}\\
\vdots & \ddots & \vdots &\ddots & \vdots  & \vdots\\
b_{1i'_{M-N+1}} & \cdots & b_{ni'_{M-N+1}} &\cdots & b_{M-N,i'_{M-N+1}}   & b_{ni'_{M-N+1}}
\ea\right|\\
=&(-1)^{N-1}\sum_{k=1}^{M-N+1}\Delta_{i'_1,\cdots,  \check i'_k,\cdots,i'_{M-N+1}}(B)  (-1)^{k} b_{ni'_k}.
\end{align*}

\end{proof}

\begin{lemma}\label{L:reside}  \cite{BP309}, \cite{BP310}, \cite{BP211-tmp}, \cite{BP211-jmp}  The Sato eigenfunction $\varphi$ and adjoint eigenfunction $\psi$, associated to a $Gr(N, M)_{\ge 0}$ KP soliton $u_0(x)$,  satisfy the {\bf orthogonality relation}
\[
\sum_{j=1}^M\varphi_j (x)\psi_j(x')=0.
\]
\end{lemma}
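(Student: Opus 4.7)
My plan is to derive the orthogonality relation as a short linear-algebraic corollary of the symmetries established in Lemma \ref{L:BP3-symmetry}. First, I would rewrite the sum as a matrix product,
\[
\sum_{j=1}^M \varphi_j(x)\,\psi_j(x') \;=\; \varphi(x,\kappa)\,\psi(x',\kappa)^T,
\]
so that proving vanishing amounts to showing that the $1\times M$ row $\varphi(x,\kappa)$ annihilates the $M\times 1$ column $\psi(x',\kappa)^T$.

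The key step is a dimensional argument. Because $A$ has full rank $N$ (being an $N\times M$ Grassmannian matrix) and all $\kappa_j\neq 0$, the $M\times N$ matrix $\mathcal{D}=\mathrm{diag}(\kappa_1^N,\ldots,\kappa_M^N)A^T$ has rank $N$; likewise, the presence of the $(M-N)\times(M-N)$ identity block inside $(-d^T,I_{M-N})$ makes the $(M-N)\times M$ matrix $\mathcal{D}'$ of full row rank $M-N$. Hence $\dim(\mathrm{im}\,\mathcal{D})=N$ and $\dim(\ker\mathcal{D}')=M-(M-N)=N$. The identity $\mathcal{D}'\mathcal{D}=0$ from Lemma \ref{L:BP3-symmetry} gives the inclusion $\mathrm{im}\,\mathcal{D}\subseteq\ker\mathcal{D}'$, and this dimension count upgrades it to an equality.

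With that equality in hand, the remaining symmetries close the argument immediately. Since $\mathcal{D}'\psi(x',\kappa)^T=0$, the column $\psi(x',\kappa)^T$ lies in $\ker\mathcal{D}'=\mathrm{im}\,\mathcal{D}$, so we may write $\psi(x',\kappa)^T=\mathcal{D}\,c(x')$ for some $c(x')\in\mathbb{C}^N$. Substituting this and using $\varphi(x,\kappa)\mathcal{D}=0$,
\[
\varphi(x,\kappa)\,\psi(x',\kappa)^T \;=\; \bigl(\varphi(x,\kappa)\mathcal{D}\bigr)\,c(x') \;=\; 0.
\]

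The proof is essentially a free corollary of Lemma \ref{L:BP3-symmetry}; the only point that requires care is the rank computation which turns the inclusion $\mathrm{im}\,\mathcal{D}\subseteq\ker\mathcal{D}'$ into an equality. This in turn relies on the non-degeneracy $\kappa_j\ne 0$ (implicit in $\mathcal{D}'$ through the factor $\mathrm{diag}(\kappa_j^{-N})$) together with the full rank of $A$ coming from the Grassmannian hypothesis. No further combinatorial Pl\"ucker manipulation is needed beyond what is already absorbed into Lemma \ref{L:BP3-symmetry}.
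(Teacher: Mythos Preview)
Your argument is correct and is essentially the paper's approach: both deduce the orthogonality from Lemma~\ref{L:BP3-symmetry} together with the fact that $\mathcal{D}$ and $\mathcal{D}'$ have complementary ranks. The only difference is cosmetic---the paper phrases this rank complementarity as $P+P'=I_{M\times M}$ for the orthogonal projectors $P=\mathcal{D}(\mathcal{D}^T\mathcal{D})^{-1}\mathcal{D}^T$ and $P'=(\mathcal{D}')^T(\mathcal{D}'(\mathcal{D}')^T)^{-1}\mathcal{D}'$, whereas you phrase it as $\mathrm{im}\,\mathcal{D}=\ker\mathcal{D}'$.
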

\begin{proof} The lemma follows by applying Lemme \ref{L:BP3-symmetry} and   the projection
\beq\label{E:d-sym-further}
\begin{split}
&P=\mathcal D(\mathcal D^T\mathcal D)^{-1}\mathcal D^T,\\
&P'=(\mathcal D')^T(\mathcal D'{\mathcal D'}^T)^{-1}\mathcal D',
\end{split}
\eeq which are orthogonal projectors, i.e.,
\beq\label{E:ortho}
\begin{gathered}
P^2=P,\quad(P')^2=P',\quad PP'=0=P'P,\\
P+P'=I_{M\times M}.
\end{gathered}
\eeq

\end{proof}
 
\begin{lemma}\label{L:green-heat} \cite{BP214} Let $\theta$ be the Heaviside function and $\lambda=\lambda_R+i\lambda_I$, $\lambda_R,\,\lambda_I\in\RR$. Then the Green function, associated to a $Gr(N, M)_{\ge 0}$ KP soliton $u_0(x)$,  has the explicit form
\begin{align}
&\mathcal G(x,x',\lambda)=\mathcal G_c(x,x',\lambda)+\mathcal G_d(x,x',\lambda),\label{E:green-heat} \\
&\mathcal G_c= -\frac{\sgn(x_2-x_2')}{2\pi}\int_\RR\theta((s^2-\lambda_I^2)(x_2-x_2')) \varphi(x,\lambda_R+is)\psi(x',\lambda_R+is) ds,\nonumber\\
&\mathcal G_d=-\theta(x_2'-x_2)\sum_{j=1}^M\theta(\lambda_R-\kappa_j)\varphi_j(x)\psi_j(x') .
\nonumber
\end{align}

\end{lemma}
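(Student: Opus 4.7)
The plan is to verify directly that $\mathcal L_x \mathcal G(x,x',\lambda)=\delta(x-x')$, exploiting that $\mathcal L_x[\varphi(x,\mu)\psi(x',\mu)]=0$ for every $\mu$ with $\mu\ne\kappa_j$ (by \eqref{E;spectral} applied to $\varphi$) and $\mathcal L_x[\varphi_j(x)\psi_j(x')]=0$ for each $j$. Consequently $\mathcal L$ produces nonzero contributions only when $-\partial_{x_2}$ falls on the piecewise factors $\sgn(x_2-x_2')$, $\theta((s^2-\lambda_I^2)(x_2-x_2'))$ inside $\mathcal G_c$, or on $\theta(x_2'-x_2)$ inside $\mathcal G_d$. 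The choice of splitting $\mathcal G=\mathcal G_c+\mathcal G_d$ is forced by the analytic structure of $\psi(x',\mu)$ as a rational function of $\mu$ with simple poles at $\mu=\kappa_1,\dots,\kappa_M$: the vertical contour $\Re\mu=\lambda_R$ avoids these poles, and $\mathcal G_d$ collects the discrete residues that cross the contour as $\lambda_R$ varies.

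Next I would compute $\mathcal L_x\mathcal G$ explicitly using $\partial_{x_2}\sgn(x_2-x_2')=2\delta(x_2-x_2')$, $\partial_{x_2}\theta((s^2-\lambda_I^2)(x_2-x_2'))=\sgn(s^2-\lambda_I^2)\delta(x_2-x_2')$, and $\partial_{x_2}\theta(x_2'-x_2)=-\delta(x_2-x_2')$. Because the integrand and the discrete summand are then evaluated at $x_2=x_2'$, every contribution factors as $\delta(x_2-x_2')$ times an expression in $x_1,x_1'$. The cross term $\sgn(x_2-x_2')\delta(x_2-x_2')$ that arises from differentiating the product $\sgn\cdot\theta$ must be handled by splitting the $s$-integral into the two regions $|s|>|\lambda_I|$ and $|s|<|\lambda_I|$ corresponding to the two sides $x_2\gtrless x_2'$; these reassemble into a single clean expression of the form
\beq\label{E:plan-collected}
\mathcal L_x\mathcal G=\delta(x_2-x_2')\left[\frac{1}{2\pi}\int_\RR \varphi(x,\lambda_R+is)\psi(x',\lambda_R+is)\,ds-\sum_{j=1}^M\theta(\lambda_R-\kappa_j)\,\varphi_j(x)\psi_j(x')\Big|_{x_2=x_2'}\right].
\eeq

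The final step is to identify the bracket in \eqref{E:plan-collected} with $\delta(x_1-x_1')$. Writing $\varphi(x,\mu)\psi(x',\mu)=e^{\mu(x_1-x_1')+\mu^2(x_2-x_2')}\chi(x,\mu)\xi(x',\mu)$, setting $x_2=x_2'$, and recalling that $\chi(x,\mu)\xi(x',\mu)$ is rational in $\mu$ with poles only at $0,\kappa_1,\dots,\kappa_M$ and normalized to $1$ at $\mu=\infty$, I would deform the contour $\Re\mu=\lambda_R$ toward the imaginary axis. The residues crossed correspond exactly to those $\kappa_j$ with $\kappa_j<\lambda_R$ and, through the residue formula \eqref{E:residue-eigenfunction}, produce precisely the discrete terms $\theta(\lambda_R-\kappa_j)\varphi_j(x)\psi_j(x')|_{x_2=x_2'}$ that are subtracted in the bracket. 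The remaining contour integral, after subtracting the $O(1/\mu)$ tail (whose residue sum $\sum_j\varphi_j(x)\psi_j(x')=0$ by Lemma \ref{L:reside}), reduces to the Fourier identity $\frac{1}{2\pi}\int_\RR e^{is(x_1-x_1')}\,ds=\delta(x_1-x_1')$.

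The main obstacle will be the contour-deformation and residue accounting in this last step: one must verify that the jumps in $\mathcal G_c$ as $\lambda_R$ crosses each $\kappa_j$ are exactly balanced by the jumps of $\mathcal G_d$ through $\theta(\lambda_R-\kappa_j)$, so that the total Green function depends continuously on $\lambda_R$ off the discrete set $\{\kappa_j\}$, and that the $O(1/\mu^2)$ corrections of $\chi(x,\mu)\xi(x',\mu)$ integrate to zero. This is precisely where the orthogonality relation of Lemma \ref{L:reside} and the explicit rational structure furnished by the Sato theory are indispensable; the distributional bookkeeping of $\sgn(x_2-x_2')\delta(x_2-x_2')$ products is the other delicate point that requires the two-region decomposition described above.
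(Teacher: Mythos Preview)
Your proposal is correct and follows essentially the same approach as the paper: the paper's proof is a one-sentence sketch listing exactly the ingredients you use---Fourier inversion, the meromorphic (rational) structure of $\chi$ and $\xi$, the residue theorem, and the orthogonality relation of Lemma~\ref{L:reside}---and then refers to \cite[Lemma 2.2]{Wu19} for the detailed direct verification, which is precisely the computation you outline. One small sharpening: since $\chi(x,\mu)\xi(x',\mu)$ is rational with only simple poles at the $\kappa_j$ and tends to $1$ at $\infty$, it equals $1+\sum_j \chi_j(x)\xi_j(x')/(\mu-\kappa_j)$ exactly, so there are no ``$O(1/\mu^2)$ corrections'' to worry about; the orthogonality relation enters when you evaluate each pole integral $\tfrac{1}{2\pi}\int_{\mathbb R}\frac{ds}{is+\lambda_R-\kappa_j}=\tfrac12\sgn(\lambda_R-\kappa_j)=\theta(\lambda_R-\kappa_j)-\tfrac12$ and the leftover $-\tfrac12\sum_j\varphi_j(x)\psi_j(x')$ vanishes.
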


\begin{proof} After deriving the orthogonality relation, Lemma \ref{L:reside}, one can apply Fourier inversion formula, meromorphic properties of normalized eigenfunctions $\chi(x,\lambda)$ and $\xi(x,\lambda)$, and the residue theorem to derive the explicit formula of the Green function $\mathcal G(x,x',\lambda)$. For a direct proof, we refer to \cite[Lemma 2.2]{Wu19}.

\end{proof}

\begin{definition}\label{D:terminology} 
For $z\in \mathfrak Z=\{ 0,\, \kappa_1,\cdots, \kappa_M  \}$,  define 
 {$a=\frac 12\min_{j\ne j'}\{|\kappa_j|, \,|\kappa_j-\kappa_{j'}|\}$},
\[
\ba{c}
D_{z}=\{\lambda\in\CC\, :\, |\lambda-z|< { a}\},\ 
   D_z^\times=\{\lambda\in\CC\, :\, 0<|\lambda-z|<  a\};\\
D_{z,r}=\{\lambda\in\CC\, :\, |\lambda-z|<  r  a\},\ 
   D_{z,r}^\times=\{\lambda\in\CC\, :\, 0<|\lambda-z|< r  a\} ,   
 \ea
 \]  characteristic functions  $\mathcal E_z(\lambda)\equiv 1$ on $D_z$, $\mathcal E_z (\lambda)\equiv 0$ on $D_z^c$, $\mathcal E_A(\lambda)\equiv 1$ on $A$, $\mathcal E_A (\lambda)\equiv 0$ elsewhere.
  \end{definition}  
  
\begin{definition}\label{D:con} For any Schwartz function $ f(x)$, define the convolution  operator 
\beq\label{E:convolution}
G\ast f(x,\lambda)\equiv\iint  G(x,x',\lambda)f(x' )dx' ,\ \ dx'=dx'_1dx'_2.
\eeq

\end{definition}   
So  the boundary value problem \eqref{E:renormal} turns into
\beq\label{E:spectral-integral}
m(x,\lambda)=\chi(x,\lambda)-  G\ast v_0m (x,\lambda) 
\eeq       
 formally. Since the initial potential $u(x)=u_0(x)+v_0(x)$ is independent of $\lambda$. To utilize a Picard iteration to solve the above integral equation for $m(x,\lambda)$,   a $\lambda$-uniform estimate for the Green function $G$ is necessary. We provide a $\lambda$-uniform estimate in the following proposition. Note that, throughout the paper, we use $C$ to denote different uniform constants.
\begin{proposition}\label{P:eigen-green}
The Green function $ G$, associated to a {\bf totally positive $Gr(N, M)_{> 0}$ KP soliton $u_0(x)$},   satisfies, for $\forall x_2-x_2'\ne 0$,   $\lambda\ne \kappa_1,\cdots,\kappa_M$,    
\beq\label{E:eigen-green}
| G(x,x',\lambda)|\le C (1+\frac1{\sqrt{|x_2-x_2'|}} ),   
\eeq    
 and 
for any Schwartz function $ f$, 
\beq\label{E:ast}
\begin{gathered}
\lim_{|x|\to\infty}G(x,x',\lambda)\ast f(x')  \to 0 .
\end{gathered}
\eeq  

\end{proposition}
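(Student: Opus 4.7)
The plan is to split $G = G_c + G_d$ according to Lemma \ref{L:green-heat} via the relation $\mathcal G = e^{\lambda(x_1-x_1')+\lambda^2(x_2-x_2')} G$, bound each summand separately to establish \eqref{E:eigen-green}, and then deduce \eqref{E:ast} by treating $G_c\ast f$ and $G_d\ast f$ independently.

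For the continuous piece, I would first substitute $\mu=\lambda_R+is$ and use $\varphi(x,\mu)=e^{\mu x_1+\mu^2 x_2}\chi(x,\mu)$ and $\psi(x',\mu)=e^{-\mu x_1'-\mu^2 x_2'}\xi(x',\mu)$. A direct calculation reduces the exponential factor in the integrand of $G_c$ to
\[
e^{-(s^2-\lambda_I^2)(x_2-x_2')+i(s-\lambda_I)[(x_1-x_1')+2\lambda_R(x_2-x_2')]},
\]
so that the Heaviside restriction $\theta((s^2-\lambda_I^2)(x_2-x_2'))$ guarantees a Gaussian-type envelope in $s$. The next step is to bound $\chi(x,\mu)\xi(x',\mu)$ uniformly in $x,\,x',\,s$ by invoking total positivity: for a $\textrm{Gr}(N,M)_{>0}$ soliton $\tau>0$ everywhere, each ratio $E_J/\tau\le 1$, the order-$N$ pole of $\chi$ at $\mu=0$ is cancelled by the order-$N$ zero of $\xi$, and the simple poles of $\xi$ at $\mu=\kappa_j$ are avoided on the integration contour because $\lambda_R\ne\kappa_j$. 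The $1/\sqrt{|x_2-x_2'|}$ factor then follows from case analysis: for $x_2>x_2'$ one integrates over $|s|>|\lambda_I|$ with the substitution $s=|\lambda_I|+t$, and for $x_2<x_2'$ one integrates over $|s|<|\lambda_I|$ with $s=|\lambda_I|\sin\phi$, applying standard Gaussian estimates that remain uniform in $\lambda_I$.

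For the discrete part, which I expect to be the main obstacle, the analogous reduction yields
\[
G_d=-\theta(x_2'-x_2)\sum_j\theta(\lambda_R-\kappa_j)\,e^{(\kappa_j-\lambda)(x_1-x_1')+(\kappa_j^2-\lambda^2)(x_2-x_2')}\chi_j(x)\xi_j(x').
\]
The exponential prefactor is not globally bounded on $\RR^2\times\RR^2$, so the estimate must come entirely from cancellation inside $\chi_j(x)\xi_j(x')$. This is where total positivity is indispensable. Writing $\chi_j=P_j/\tau$ and $\xi_j=Q_j/\tau$, where $P_j(x)$ is a positive combination of the $E_J$ with $j\notin J$ and $Q_j(x')$ arises analogously from the residue formula for $\psi$ with $j\in J$, the sign constraints $\lambda_R>\kappa_j$ (from $\theta(\lambda_R-\kappa_j)$) and $x_2-x_2'<0$ (from $\theta(x_2'-x_2)$) imply, after a direction-by-direction analysis of asymptotic cones, that in any cone where the exponential prefactor grows the dominant term of $P_j(x)Q_j(x')/[\tau(x)\tau(x')]$ carries an opposing exponential decay of at least the same rate. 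This generalizes the $\textrm{Gr}(1,2)_{>0}$ argument in \cite{Wu19} and will require careful bookkeeping of the dominant Plücker monomial in each cone; total positivity of $A$ is what prevents cancellations from destroying the bound.

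For the limit \eqref{E:ast}, I would treat $G_c\ast f$ and $G_d\ast f$ separately. For $G_c\ast f$, Fubini is permissible in view of the Gaussian damping in $s$, and the resulting inner $x'$-integral against Schwartz $f$ is a Fourier transform in the variables conjugate to the linear phase $(s-\lambda_I)[(x_1-x_1')+2\lambda_R(x_2-x_2')]$; Riemann--Lebesgue, together with dominated convergence in $s$, delivers the vanishing as $|x|\to\infty$. For $G_d\ast f$, the support condition $\theta(x_2'-x_2)$ drives the integral to zero as $x_2\to+\infty$ against Schwartz $f$, while in all other directions the factor $\chi_j(x)e^{(\kappa_j-\lambda)x_1+(\kappa_j^2-\lambda^2)x_2}$ decays by the same dominant-exponential analysis used for the $G_d$ bound, yielding the limit once it is integrated against the remaining Schwartz factor in $x'$.
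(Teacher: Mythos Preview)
Your treatment of $G_c$ is broadly aligned with the paper's: both reduce to the explicit integral \eqref{E:g-c-gr-nm} and exploit the Gaussian envelope enforced by the Heaviside cutoff. One point is underspecified, however. You propose to ``bound $\chi(x,\mu)\xi(x',\mu)$ uniformly in $x,x',s$'' and note that the poles of $\xi$ at $\kappa_j$ are off the contour because $\lambda_R\ne\kappa_j$. That is true pointwise, but the proposition asserts a constant $C$ \emph{uniform in $\lambda$}; as $\lambda_R\to\kappa_j$ the pole at $s=0$ approaches the contour and $|\chi\xi|$ is not uniformly bounded. The paper (following \cite{Wu18}) does not bound the product directly but decomposes the rational kernel in \eqref{E:g-c-gr-nm} into pieces whose combination of discontinuous, oscillatory, and exponentially decaying factors remains uniformly controlled; you would need a comparable refinement.

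The substantive gap is in $G_d$. Your plan is a direct cone-by-cone balance, asserting that ``$P_j(x)$ is a positive combination of the $E_J$ with $j\notin J$''. This is false in general: from \eqref{E:sato} the coefficient of $E_J$ in $\chi_j$ is $\Delta_J(A)\prod_{k\in J}(1-\kappa_k/\kappa_j)=\Delta_J(A)\,\kappa_j^{-N}\prod_{k\in J}(\kappa_j-\kappa_k)$, and the product $\prod_{k\in J}(\kappa_j-\kappa_k)$ changes sign whenever $J$ contains indices both above and below $j$; the analogous statement for $\xi_j$ fails for the same reason. Without that sign control your dominant-exponential bookkeeping does not go through term by term, and you give no mechanism for the cancellations across the sum over $j$ that are actually needed. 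The paper takes a different route: it invokes the Boiti--Pempinelli--Pogrebkov decomposition formula for $G_d$ \cite[(3.15)--(3.17)]{BP211}, \cite[(3.20)--(3.22)]{BP212}, a nontrivial Sato-theoretic identity that rewrites the sum in a form where, under the totally positive hypothesis $A\in\textrm{Gr}(N,M)_{>0}$, boundedness becomes verifiable by the methods of \cite[Proposition~2.1]{Wu18}. The paper singles out exactly this step as the technical reason the totally positive hypothesis is imposed. Your proposal would need either to recover that decomposition or to supply an independent argument handling the mixed signs; as written it does not.
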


\begin{proof} 
 From \eqref{E:sato},  \eqref{E:pole}, and Lemma \ref{L:green-heat},   
\begin{align}
&  G_c (x,x', \lambda)\label{E:g-c-gr-nm}\\
= &-\frac 1{2\pi}\int_\RR ds\ \mbox{sgn}(x_2-x_2')\theta((s^2-\lambda_I^2)(x_2-x_2')) \chi(x,\lambda_R+is)\nonumber\\
  \times & \xi(x',\lambda_R+is) e^{ [\lambda-(\lambda_R+is)] (x_1'-x_1)+[\lambda^2-(\lambda_R+is)^2](x_2'-x_2) }\nonumber\\
=&-\frac{e^{i[\lambda_I(x_1'-x_1)+2\lambda_I\lambda_R(x'_2-x_2)]}}{2\pi} \int_\RR ds\ { \mbox{sgn}(x_2-x_2')\theta((s^2-\lambda_I^2)(x_2-x_2')) } \nonumber\\
 \times &  e^{ { (s^2-\lambda^2_I)(x_2'-x_2)}- { is[  (x_1'-x_1)+2 \lambda_R  (x_2'-x_2)] }}\nonumber \\
  \times & \sum_J\frac{\Pi_{m\in J}({\lambda_R+is}-{\kappa_m})\Delta_J(A)E_J(x)}{\tau(x)}\, \sum_I\frac{\Delta_I(A)E_I(x') }{\Pi_{n\in I}({\lambda_R+is}-{\kappa_n} )\tau(x')},\nonumber
\end{align} where $J=\{j_1< \cdots< j_N\}$,  $I=\{i_1< \cdots< i_N\}$, $1\le j_m, i_n\le M$, and $\Delta_J(A)$, $\Delta_I(A)$ are the Pl$\ddot{\textrm u}$cker coordinates (see \eqref{E:line-grassmannian} and \eqref{E:e-j1m}).    
Since  
\[
\Delta_J(A), \ \Delta_I(A),\ |\frac{E_J(x)}{\tau(x)}|,\  |\frac{E_I(x')}{\tau(x')}|<C,
\] replacing $2$ by general $ M\ge 2$,  \eqref{E:g-c-gr-nm} is in a complete analogy with \cite[ (2.21), (2.30), (2.31) ]{Wu18} and  one can follow   \cite[Proposition 2.1]{Wu18} to decompose \eqref{E:g-c-gr-nm} into a combination of discontinuous, oscillatory, exponentially decaying factors  and derive 
\[
\begin{gathered}
| G_c(x,x',\lambda)|\le C (1+\frac1{\sqrt{|x_2-x_2'|}} ), \\
\lim_{|x|\to\infty}G_c(x,x',\lambda)\ast f(x')  \to 0.
\end{gathered}
\]

Via the Sato theory and ingenious ideas, Boiti, Pempinelli, and Pogrebkov derive a decomposition formula for $G_d$ in \cite[(3.15)-(3.17)]{BP211} or \cite[(3.20)-(3.22)]{BP212}. Then applying the totally positive condition $A\in \textrm{Gr}(N, M)_{> 0}$ to that decomposition formula and similar argument as that in \cite[Proposition 2.1]{Wu18},   one can justify
\[
\begin{gathered}
|G_d(x,x',\lambda)|<C,\\
 \lim_{|x|\to\infty}G_d(x,x',\lambda)\ast f(x')  \to 0 .
\end{gathered}
\]

\end{proof}

Thanks to Proposition \ref{P:eigen-green}, especially, the technical reasons in the derivation of analytical properties of $G_d$, we restrict our focus to totally positive $Gr(N, M)_{> 0}$ KP solitons from now on.

\begin{theorem}\label{E:exitence-spectral} Suppose $u(x)=u_0(x)+v_0(x)$ with $u_0$  a totally positive $Gr(N, M)_{> 0}$ KP soliton,   $\partial_x^kv_0 \in  {L^1\cap L^\infty}$, $   |k|\le 2 $, $|v_0|_{{L^1\cap L^\infty}}\ll 1$, and $v_0(x) $ a real-valued function. Then, for fixed $\lambda\in \CC \backslash\{ 0,\kappa_1,\cdots,\kappa_M\}$, there is a   unique solution  $m(x, \lambda)$ to the the boundary value problem \eqref{E:Lax-bdry}.

\end{theorem}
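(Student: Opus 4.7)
The plan is to convert the boundary value problem \eqref{E:Lax-bdry} into its renormalized form \eqref{E:renormal} and then into the equivalent integral equation \eqref{E:spectral-integral}
\[
m(x,\lambda) \;=\; \chi(x,\lambda)\;-\;(G\ast v_0 m)(x,\lambda),
\]
and to solve this fixed-point problem by a Neumann series in the Banach space $L^\infty(\RR^2_x)$, treating $\lambda\in\CC\backslash\{0,\kappa_1,\dots,\kappa_M\}$ as a frozen parameter. The engine of the whole argument is Proposition \ref{P:eigen-green}, which supplies the $\lambda$-uniform control on the Green function.

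As a preliminary step I would observe that the inhomogeneity $\chi(\cdot,\lambda)$ lies in $L^\infty(\RR^2)$: each ratio $E_J(x)/\tau(x)$ in \eqref{E:sato} is globally bounded because $\tau(x)>0$ for a regular $\textrm{Gr}(N,M)_{\ge 0}$ soliton, and the rational prefactors $1-\kappa_{j_n}/\lambda$ are finite constants for $\lambda\ne 0$.

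The heart of the proof is the contraction estimate for the linear operator $T_\lambda f = G\ast (v_0 f)$ on $L^\infty(\RR^2)$. Invoking Proposition \ref{P:eigen-green} gives
\[
|T_\lambda f(x)| \;\le\; C\|f\|_\infty \iint\Bigl(1+\tfrac{1}{\sqrt{|x_2-x_2'|}}\Bigr)|v_0(x')|\,dx'.
\]
The constant piece contributes $C\|v_0\|_{L^1}\|f\|_\infty$. For the singular piece I would first perform the $x_1'$-integration (using the mixed-norm control supplied by $\partial^k_x v_0\in L^1\cap L^\infty$, $|k|\le 2$, exactly as in \cite{Wu19}) and then split the $x_2'$-integration at $|x_2-x_2'|=1$: on the outer region the factor $|x_2-x_2'|^{-1/2}\le 1$ reduces to an $L^1$ estimate, while on the inner region the local integrability $\int_{|s|\le 1}|s|^{-1/2}\,ds<\infty$ combined with the $L^\infty$ bound on the $x_1'$-integrated norm closes the argument. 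The outcome is
\[
\|T_\lambda\|_{L^\infty\to L^\infty}\;\le\;C\,|v_0|_{L^1\cap L^\infty}
\]
uniformly in $\lambda$. Under the smallness hypothesis this forces $\|T_\lambda\|<1$, so $I+T_\lambda$ is invertible on $L^\infty$ by Neumann series, and the unique bounded solution is $m=(I+T_\lambda)^{-1}\chi$.

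Finally, I would verify that this $m$ genuinely satisfies \eqref{E:Lax-bdry}. The PDE is recovered by applying $L$ to the integral equation together with $LG(\cdot,x',\lambda)=\delta(\cdot-x')$. The boundary condition $m-\chi\to 0$ as $|x|\to\infty$ follows from \eqref{E:ast}: since $v_0 m\in L^1\cap L^\infty$, a Schwartz approximation of $v_0 m$ together with uniform continuity of $G\ast(\cdot)$ in the relevant norm transfers the decay to the limit. The step I expect to be the main obstacle is precisely the convolution estimate against the non-integrable-looking $|x_2-x_2'|^{-1/2}$ singularity of $G$, but this is the same delicate point already handled for $\textrm{Gr}(1,2)_{>0}$ in \cite{Wu19}, and the present $\textrm{Gr}(N,M)_{>0}$ setting reduces to it verbatim thanks to Proposition \ref{P:eigen-green}.
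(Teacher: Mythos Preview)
Your proposal is correct and follows essentially the same approach as the paper: reduce \eqref{E:Lax-bdry} to the integral equation \eqref{E:spectral-integral} and solve it by a Neumann (Picard) iteration in $L^\infty(\RR^2_x)$, with the contraction driven by Proposition \ref{P:eigen-green} and the smallness of $|v_0|_{L^1\cap L^\infty}$. The paper's own proof is a two-sentence sketch that simply invokes Proposition \ref{P:eigen-green} and the equivalence of \eqref{E:Lax-bdry} with \eqref{E:spectral-integral}; your version spells out the $L^\infty$ bound on $\chi$, the splitting of the $|x_2-x_2'|^{-1/2}$ singularity, and the verification of the boundary condition via \eqref{E:ast}, but these are elaborations of the same argument rather than a different route.
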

\begin{proof}
Applying {Proposition} \ref{P:eigen-green} and the assumption $\partial_ x^kv_0 \in  {L^1\cap L^\infty}$, $ 0\le |k|\le 2 $, $|v_0|_{{L^1\cap L^\infty}}\ll 1$, for $ \lambda\ne 0$, $\lambda_R\notin  \{   \kappa_1,\cdots, \kappa_M  \}$, one can prove  the unique solvability of an $L^\infty$ solution to \eqref{E:spectral-integral}. 
Moreover, from \eqref{E:sym-0} and Lemma \ref{L:green-heat},  the unique solvability of \eqref{E:Lax-bdry} is equivalent to that of \eqref{E:spectral-integral}. 
\end{proof}

\section{Direct Problem : scattering data} \label{S:sd}

The goal of the inverse scattering theory is to prove that the  eigenfunction $m(x,\lambda)$ can be characterized by special analytical and algebraic constraints (scattering data) and these constraints will evolve with time linearly. In this section, we shall construct  the forward scattering transformation which maps the potential $u(x)$ to these scattering data and prove that it can linearise the  Kadomtsev-Petviashvili equation \eqref{E:KPII-intro}.  

To start, we present analytical and algebraic constraints of the Green function.
\begin{lemma}\label{L:discontinuities}
The Green function $ G$,   associated to a totally positive $Gr(N, M)_{> 0}$ KP soliton $u_0(x)$, 
satisfies 
\begin{align}
(i)\ &G(x,x',\lambda)=\overline{G(x,x',\overline\lambda)};\label{E:green-sym}\\
(ii)\ &\textit{near the pole of the Sato normalized  eigenfunction $\chi $, namely, $\lambda\in D_{0}^\times$,}\label{E:G-expansion}\\
&G(x,x',\lambda)=\sum_{m=0}^{N-1}G_0^{(m)}(x,x') \lambda ^m+\omega_0(x,x',\lambda),\nonumber\\
&|\lambda^m G_0^{(m)}|_{L^\infty(D_0)},\, |\frac{G_0^{(m)}}{ 1+|x-x'|^m }|_{L^\infty(D_0)}  ,\,
  |\frac{\omega_0}{|\lambda|^{n-k}\,(1+|x-x'|^{n-k})}|_{L^\infty(D_0)}\nonumber\\
&  \le C (1+\frac1{\sqrt{|x_2-x_2'|}} ),\qquad 0\le m\le N-1,\ 0\le k\le n\le N;\nonumber\\
(iv)\ &\textit{near the poles of the Sato normalized adjoint eigenfunction $\xi $, i.e., $\lambda\in D_{ \kappa_j}^\times$,}\label{E:g-asym-pm-i-prep-limit-sym}\\
&G(x,x', \lambda) 
=     
{ \mathfrak G}_j(x,x')+\frac {1}\pi \chi_j(x)\xi_j(x')
\cot^{-1}\frac{\lambda_R-\kappa_j}{|\lambda_I|}+\omega_j(x,x', \lambda) , \nonumber\\
&\cot^{-1}\frac {\lambda_R-\kappa_j}{|\lambda_{I}|}
= \left\{
{\begin{array}{lcl}
   \alpha,&0<\alpha\le\pi , & \lambda\in D_{\kappa_j}^\times,\\
2\pi-\alpha,&\pi\le\alpha<2\pi,& \lambda\in D_{\kappa_j}^\times,
\end{array}}
\right. \nonumber\\
&|{\mathfrak G}_j  |_{L^\infty(D_{ {\kappa}_j})},\,  |\omega_j  |_{L^\infty(D_{ {\kappa}_j})} ,\,
 |\frac {\omega_j (x,x',\lambda)}{(\lambda-\kappa_j)(1+|x' -x |)} |_{L^\infty(D_{ {\kappa}_j})}\nonumber  \\
 &\le C(1+\frac1{\sqrt{|x_2-x_2'|}} ),\qquad 1\le j\le M.\nonumber
\end{align}
Moreover,  then the Green function $\mathcal G$ satisfies the symmetry (under the mod $M$-condition) \cite{BP214} 
\beq\label{E:g-3-x-fix-theorem}
\mathcal G (x,x',\kappa^+_{j-1})=\mathcal G (x,x',\kappa_j^+)+\varphi_j(x)\psi_j(x'),\ \kappa^+_j=\kappa_j+0^+. 
\eeq
\end{lemma}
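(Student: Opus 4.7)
The overall plan is to deduce (i), (ii), (iv), and the mod~$M$ symmetry directly from the explicit formula for $\mathcal G$ in Lemma~\ref{L:green-heat}, together with the uniform bounds in Proposition~\ref{P:eigen-green}, passing between $\mathcal G$ and $G$ via the entire factor $G=e^{-\lambda(x_1-x_1')-\lambda^2(x_2-x_2')}\mathcal G$. For (i), reality of $u_0$, of the $\kappa_j$ and of the $\Delta_J(A)$ gives $\chi(x,\bar\lambda)=\overline{\chi(x,\lambda)}$ and $\xi(x,\bar\lambda)=\overline{\xi(x,\lambda)}$; the substitution $s\mapsto -s$ in the integral defining $\mathcal G_c$ then sends the integrand to its conjugate (with $\lambda$ replaced by $\bar\lambda$), while leaving invariant the constraint $(s^2-\lambda_I^2)(x_2-x_2')>0$. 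The discrete part $\mathcal G_d$ depends on $\lambda$ only through $\lambda_R$ and through the real quantities $\varphi_j\psi_j$, so the identity holds automatically, and multiplication by the real-symmetric exponential prefactor gives \eqref{E:green-sym}.

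For (ii), the crucial observation is that $\chi(x,\cdot)$ has an order-$N$ pole at $0$ while $\xi(x',\cdot)$ has an order-$N$ zero at $0$, so the product $\chi\xi$ is holomorphic across $\lambda=0$; the integrand of $\mathcal G_c$ is therefore holomorphic on $D_0$, and $\mathcal G_d$ is locally constant in $\lambda_R$ there because $0$ is separated from every $\kappa_m$ by at least $a$. Hence $G$ extends holomorphically across $\lambda=0$, and the coefficients $G_0^{(m)}$ together with the remainder $\omega_0$ are extracted from the Cauchy integral formula on $|\lambda|=a/2$, the uniform $L^\infty$ bound being inherited from Proposition~\ref{P:eigen-green}. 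The polynomial $(1+|x-x'|^m)$ growth follows by Taylor-expanding the entire prefactor $e^{-\lambda(x_1-x_1')-\lambda^2(x_2-x_2')}$, whose $m$th coefficient is a polynomial of degree $m$ in $(x_1-x_1',x_2-x_2')$.

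For (iv), near $\lambda=\kappa_j$ I would split $\xi(x',\lambda)=\xi_j(x')/(\lambda-\kappa_j)+\xi^{\flat}(x',\lambda)$ with $\xi^{\flat}$ holomorphic at $\kappa_j$; the $\xi^{\flat}$-term contributes a continuous piece of $\mathcal G_c$, absorbed into $\mathfrak G_j$. Substituting the singular summand and freezing $\chi(x,\lambda_R+is)$ at $\chi_j(x)$ (the error being smooth in $\lambda$ and vanishing linearly at $\lambda=\kappa_j$, hence absorbable into $\omega_j$) reduces the leading singular contribution to
\[
-\frac{\chi_j(x)\xi_j(x')\,\sgn(x_2-x_2')}{2\pi}\int_\RR \theta\bigl((s^2-\lambda_I^2)(x_2-x_2')\bigr)\,\frac{e^{\,is[(x_1-x_1')+2\kappa_j(x_2-x_2')]-s^2(x_2-x_2')}}{(\lambda_R-\kappa_j)+is}\,ds.
\]
Deforming this $s$-contour and collecting the residue at $s=i(\kappa_j-\lambda_R)$ whenever the $\theta$-factor permits (distinguishing $x_2\gtrless x_2'$ and $\lambda_I\gtrless 0$) produces a Poisson-kernel-type integral that evaluates to exactly $\tfrac{1}{\pi}\cot^{-1}\!\bigl((\lambda_R-\kappa_j)/|\lambda_I|\bigr)$ on each half-disk, with the piecewise branches as stated. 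Bounds on $\mathfrak G_j$ and $\omega_j$ then follow from Proposition~\ref{P:eigen-green} applied to the remainder together with the $(1+|x-x'|)$-factor from Taylor expansion.

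Finally, for the mod~$M$ symmetry \eqref{E:g-3-x-fix-theorem}, direct comparison of the discrete part gives $\mathcal G_d(\kappa_{j-1}^+)-\mathcal G_d(\kappa_j^+)=\theta(x_2'-x_2)\varphi_j(x)\psi_j(x')$, while the continuous part contributes the complementary jump $\theta(x_2-x_2')\varphi_j(x)\psi_j(x')$ as $\lambda_R$ crosses $\kappa_j$ with $\lambda_I=0^+$ (this is the $\lambda_I\to 0$ limit of the $\cot^{-1}$-jump obtained in (iv), since the pole of $\xi$ sweeps across $s=0$), and the two Heaviside terms sum to $\varphi_j\psi_j$. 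The main obstacle will be the $s$-contour calculation in (iv): one has to treat the regimes $x_2\gtrless x_2'$ separately (the $\theta$-factor restricts the $s$-integration oppositely), pin down the correct branch of $\cot^{-1}$ on each half-disk $\lambda_I\gtrless 0$, and check that the residual pieces collected into $\mathfrak G_j$ and $\omega_j$ admit the stated uniform bounds with $(1+|x-x'|)$ polynomial growth.
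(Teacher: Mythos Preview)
Your treatment of (i), (ii), (iv) is essentially the strategy the paper has in mind (it defers details to \cite{Wu18}): exploit the explicit formula from Lemma~\ref{L:green-heat}, use that $\chi\xi$ is regular at $0$ (pole and zero cancel) and that the simple pole of $\xi$ at $\kappa_j$ produces the $\cot^{-1}$-term, and control remainders via Proposition~\ref{P:eigen-green}. This part is fine.

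For the symmetry \eqref{E:g-3-x-fix-theorem}, however, your route differs from the paper's and has a gap. The paper does not compute the jumps of $\mathcal G_c$ and $\mathcal G_d$ separately; instead it splits $\mathcal G=g_M+f$ (your display \eqref{E:G-sym-+}), uses the residue theorem under the hypothesis $|x-x'|<\infty$ to show that $g_M$ is \emph{independent of $\lambda$}, evaluates $f$ explicitly via the $\cot^{-1}$-identity and the orthogonality relation $\sum_j\varphi_j\psi_j=0$ (Lemma~\ref{L:reside}), and arrives at the closed formula \eqref{E:g-x-fixed-sym}
\[
\mathcal G(x,x',\lambda)=g_M(x,x')+\sum_{j=1}^M\varphi_j(x)\psi_j(x')\,\theta(\kappa_j-\lambda),\qquad \lambda\in\RR,
\]
from which \eqref{E:g-3-x-fix-theorem} is read off, including the mod~$M$ case.

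Your argument invokes the $\cot^{-1}$-jump from (iv), but that decomposition is only valid on $D_{\kappa_j}^\times$ and gives the \emph{local} jump $\mathcal G(\kappa_j^-)-\mathcal G(\kappa_j^+)$, not $\mathcal G(\kappa_{j-1}^+)-\mathcal G(\kappa_j^+)$. To bridge the two you must show that $\mathcal G_c$ (for $\lambda_I=0$, $x_2>x_2'$) is actually constant on each open interval $(\kappa_{j-1},\kappa_j)$; this is true---write the $s$-integral as a contour integral over the vertical line $\text{Re}=\lambda_R$ and shift by Cauchy's theorem in the pole-free strip, using the Gaussian decay of $e^{(\lambda_R+is)^2(x_2-x_2')}$---but it is not implied by (iv) and you should say it. Finally, the mod~$M$ case $j=1$ (i.e.\ $\kappa_0^+=\kappa_M^+$) does not follow from a local jump at all: it is equivalent to $\sum_{j=1}^M\varphi_j\psi_j=0$, and you need to invoke Lemma~\ref{L:reside} explicitly there.
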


\begin{proof} We omit the details for the proof of  \eqref{E:green-sym}-\eqref{E:g-asym-pm-i-prep-limit-sym}  and refer \cite[Lemma 2.3, Theorem 1]{Wu18} for a similar detailed proof.

 To prove   \eqref{E:g-3-x-fix-theorem}, one can follow the argument in \cite[ \S 3]{BP214} which relies on the condition  $|x-x'|$ is bounded.   Precisely, from \eqref{E:sato}, \eqref{E:pole}, \eqref{E:residue-eigenfunction}, Lemma \ref{L:green-heat}, we decompose 
\begin{align}
    \mathcal G (x,x',\lambda) = &g_M(x,x',\lambda)+f(x,x',\lambda),\label{E:G-sym-+}\\   
 g_M(x,x',\lambda) 
=&-\frac{\theta (x_2-x_2')}{2\pi}[\int_\RR  e^{(\lambda_R+is)(x_1-x_1')+(\lambda_R+is)^2(x_2-x_2') }( 1+ \sum_{j=1}^M\frac{\chi_j (x)\xi_j(x')}{is+\lambda_R-\kappa_j} ) ds\nonumber\\
-&  2\pi \sum_{j=1}^M\theta(\lambda_R-\kappa_j)\varphi_j(x)\psi_j(x')   ],\nonumber\\
 f(x,x',\lambda) 
=& \frac{ 1}{2\pi}\int_{-|\lambda_{ I}|}^{|\lambda_{ I}|}  e^{(\lambda_R+is)(x_1-x_1')+(\lambda_R+is)^2(x_2-x_2') }( 1+ \sum_{j=1}^M\frac{\chi_j (x)\xi_j(x')}{is+\lambda_R-\kappa_j} ) ds\nonumber\\
-&  \sum_{j=1}^M\theta(\lambda_R-\kappa_j)\varphi_j(x)\psi_j(x') .  \nonumber
\end{align} If $|x-x'|$ is bounded, applying the residue theorem, 
\begin{align}
 &g_M(x,x',\lambda)\label{E:G-sym-+-1} \\
=&-\frac{\theta (x_2-x_2')}{2\pi}[ \int_\RR  e^{(\lambda_R+is)(x_1-x_1')+(\lambda_R+is)^2(x_2-x_2') }ds 
\nonumber\\
+& \sum_{j=1}^M\chi_j (x)\xi_j(x') \int_\RR   \frac{e^{(\lambda_R+is)(x_1-x_1')+(\lambda_R+is)^2(x_2-x_2') }-e^{ \kappa_j (x_1-x_1')+\kappa_j^2(x_2-x_2') }}{is+\lambda_R-\kappa_j}  ds \nonumber\\
+&\sum_{j=1}^M\varphi_j (x)\psi_j(x')\int_\RR   \frac{1}{is+\lambda_R-\kappa_j}  ds  -2\pi \sum_{j=1}^M\theta(\lambda_R-\kappa_j)\varphi_j(x)\psi_j(x')   ]\nonumber\\
=&-\frac{\theta (x_2-x_2')}{2\pi}[ \int_\RR  e^{(\lambda_R+is)(x_1-x_1')+(\lambda_R+is)^2(x_2-x_2') }ds\nonumber\\
+& \sum_{j=1}^M\chi_j (x)\xi_j(x') \int_\RR   \frac{e^{(\lambda_R+is)(x_1-x_1')+(\lambda_R+is)^2(x_2-x_2') }-e^{ \kappa_j (x_1-x_1')+\kappa_j^2(x_2-x_2') }}{is+\lambda_R-\kappa_j}  ds]\nonumber\\
\equiv &g_M(x,x').\nonumber
\end{align}
Using a similar argument, Lemma \ref{L:reside},  and  
\[
\int_{-|\lambda_I|}^{|\lambda_I|}\frac{1}{s-i(\lambda_R-\kappa_j)}ds
= 
2 \pi i[\theta(\lambda_R-\kappa_j)-1] +2i\cot^{-1}\frac{\lambda_R- \kappa_j}{ |\lambda_I|}, \ \lambda\in D_{\kappa_j}^\times , 
\] one obtains
\begin{align}
&f(x,x',\lambda)\label{E:G-sym-+-2}\\
=& \frac{ 1}{2\pi}\int_{-|\lambda_{ I}|}^{|\lambda_{ I}|}  e^{(\lambda_R+is)(x_1-x_1')+(\lambda_R+is)^2(x_2-x_2') }ds\nonumber\\
 +&\frac{  1}{2\pi}\sum_{j=1}^M\chi_j (x)\xi_j(x')\int_{-|\lambda_{ I}|}^{|\lambda_{ I}|}  \frac{ e^{(\lambda_R+is)(x_1-x_1')+(\lambda_R+is)^2(x_2-x_2') }- e^{\kappa_j(x_1-x_1')+\kappa_j^2(x_2-x_2') }}{is+\lambda_R-\kappa_j}  ds\nonumber\\
 +&\frac{ 1}{2\pi} \sum_{j=1}^M\varphi_j (x)\psi_j(x')\{2 \pi  [\theta(\lambda_R-\kappa_j)-1] +2 \cot^{-1}\frac{\lambda_R- \kappa_j}{ |\lambda_I|} \}\nonumber\\
   -& \sum_{j=1}^M\theta(\lambda_R-\kappa_j)\varphi_j(x)\psi_j(x') .
 \nonumber
\end{align}
Combining \eqref{E:G-sym-+} - \eqref{E:G-sym-+-2}, and applying Lemma \ref{L:reside}, if $|x-x'|$ is bounded, we obtain
\begin{align}
&\mathcal G (x,x',\lambda)\label{E:g-3-x-fix} \\
= &g_M(x,x')+
\frac{ 1}{2\pi}\int_{-|\lambda_{ I}|}^{|\lambda_{ I}|}  e^{(\lambda_R+is)(x_1-x_1')+(\lambda_R+is)^2(x_2-x_2') }ds\nonumber\\ 
+&\frac{  1}{2\pi}\sum_{j=1}^M\chi_j (x)\xi_j(x') \int_{-|\lambda_{ I}|}^{|\lambda_{ I}|}  \frac{ e^{(\lambda_R+is)(x_1-x_1')+(\lambda_R+is)^2(x_2-x_2') }- e^{\kappa_j(x_1-x_1')+\kappa_j^2(x_2-x_2') }}{is+\lambda_R-\kappa_j}  ds\nonumber\\
+&\frac 1\pi \sum_{j=1}^M \varphi_j(x)\psi_j(x')\cot^{-1}\frac{\lambda_R- \kappa_j}{ |\lambda_I|}.
\nonumber
\end{align}
 Consequently, using $|x-x'|<\infty$,
\beq\label{E:g-x-fixed-sym}
\mathcal G (x,x',\lambda) = g_M(x,x')+  \sum_{j=1}^M \varphi_j(x)\psi_j(x')\theta(\kappa_j-\lambda),\  \forall \lambda\in\RR.
\eeq So \eqref{E:g-3-x-fix-theorem} is verified.

\end{proof}

The following lemma will be used to show that $m(x,\lambda)$ satisfies a $\overline\partial$ problem.
\begin{lemma} \label{L:cont-debar} \cite{BP214}, \cite{VA04}
For $\lambda_I\ne 0$,
\[
\partial_{\bar\lambda}  G(x,x' ,\lambda)=  
 -\frac {\sgn(\lambda_I)}{2\pi i}e^{(\overline\lambda-\lambda)(x_1-x_1')+(\overline\lambda ^2-\lambda^2)(x_2-x_2')} \chi(x,\overline\lambda )\xi(x',\overline\lambda).
\] 

\end{lemma}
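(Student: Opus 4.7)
The plan is to compute $\partial_{\bar\lambda}G$ by first passing to $\mathcal G$ and then using the explicit formula of Lemma \ref{L:green-heat}. Since $\mathcal G(x,x',\lambda)=e^{\lambda(x_1-x_1')+\lambda^2(x_2-x_2')}G(x,x',\lambda)$ and the exponential prefactor is entire in $\lambda$, one has
$$\partial_{\bar\lambda}G=e^{-\lambda(x_1-x_1')-\lambda^2(x_2-x_2')}\,\partial_{\bar\lambda}\mathcal G,$$
so it suffices to evaluate $\partial_{\bar\lambda}\mathcal G_c$ and $\partial_{\bar\lambda}\mathcal G_d$ separately. The discrete summand $\mathcal G_d$ depends on $\lambda$ only through the factors $\theta(\lambda_R-\kappa_j)$, which are locally constant on the open set $\{\lambda_I\ne 0,\ \lambda_R\notin\{\kappa_1,\ldots,\kappa_M\}\}$; hence $\partial_{\bar\lambda}\mathcal G_d=0$ pointwise there and the whole computation reduces to the continuous piece.

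For $\mathcal G_c$, I would write
$$\mathcal G_c=-\frac{\sgn(x_2-x_2')}{2\pi}\int_\RR \theta\bigl((s^2-\lambda_I^2)(x_2-x_2')\bigr)\,F(\lambda_R,s)\,ds,$$
with $F(\lambda_R,s):=\varphi(x,\lambda_R+is)\psi(x',\lambda_R+is)$, and apply the decomposition $2\partial_{\bar\lambda}=\partial_{\lambda_R}+i\partial_{\lambda_I}$. The $\lambda_I$-derivative hits only the characteristic function and yields boundary contributions at $s=\pm|\lambda_I|$ with overall weight $-\sgn(\lambda_I)\sgn(x_2-x_2')$. The $\lambda_R$-derivative, after using holomorphy of $F$ in $\mu=\lambda_R+is$ (so $\partial_{\lambda_R}F=-i\partial_s F$), is converted by an integration by parts in $s$ into further boundary evaluations at the same two points; the boundary contributions at $s=\pm\infty$ vanish thanks to the Gaussian damping $e^{-s^2(x_2-x_2')}$ inside $e^{\mu^2(x_2-x_2')}$ when $x_2>x_2'$, and the case $x_2<x_2'$ is immediate because the integration range is the bounded interval $[-|\lambda_I|,|\lambda_I|]$.

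Collecting all boundary contributions yields an expression of the form $\alpha_+ F(\lambda_R,|\lambda_I|)+\alpha_- F(\lambda_R,-|\lambda_I|)$, and a sign check shows that, for each sign of $\lambda_I$, exactly one of the coefficients $\alpha_\pm$ survives, so the four sign configurations of $(\lambda_I,x_2-x_2')$ collapse into the single identity
$$(\partial_{\lambda_R}+i\partial_{\lambda_I})\int \theta\bigl((s^2-\lambda_I^2)(x_2-x_2')\bigr)F(\lambda_R,s)\,ds = -2i\,\sgn(x_2-x_2')\,\sgn(\lambda_I)\,F(\lambda_R,-\lambda_I).$$
Since the evaluation at $s=-\lambda_I$ corresponds to $\mu=\lambda_R-i\lambda_I=\overline\lambda$, one has $F(\lambda_R,-\lambda_I)=\varphi(x,\overline\lambda)\psi(x',\overline\lambda)$. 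Dividing by $4\pi$, reinserting the exponential prefactor from the first step, and unpacking $\varphi,\psi$ in terms of $\chi,\xi$ using \eqref{E:sato} and \eqref{E:pole} produces the asserted identity, after the elementary rewriting $\tfrac{i}{2\pi}=-\tfrac{1}{2\pi i}$.

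The hard part is precisely the sign bookkeeping just invoked: the two boundary evaluations at $s=+|\lambda_I|$ either reinforce or exactly cancel (and likewise at $s=-|\lambda_I|$), depending on the signs of $\lambda_I$ and $x_2-x_2'$, and the content of the lemma is that the cancellation always leaves exactly one surviving term, namely the evaluation at $\mu=\overline\lambda$ with clean weight $\sgn(\lambda_I)$. A secondary, routine, point is the justification of the integration by parts at $s=\pm\infty$ in the case $x_2>x_2'$, which follows from the Gaussian factor $e^{-s^2(x_2-x_2')}$ dominating the bounded rational growth of $\chi(x,\mu)\xi(x',\mu)$ along the contour $\mathrm{Re}\,\mu=\lambda_R$.
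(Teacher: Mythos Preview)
Your computation is correct and follows the same core idea as the paper: differentiate the integral representation of Lemma~\ref{L:green-heat}, convert the $\partial_{\lambda_R}$ derivative to a $\partial_s$ derivative via holomorphy of $F(\mu)=\varphi(x,\mu)\psi(x',\mu)$, and collect boundary evaluations at $s=\pm|\lambda_I|$, the surviving one being at $\mu=\overline\lambda$.

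The difference from the paper lies in how the discrete piece is handled. You restrict at the outset to the open set $\{\lambda_I\neq 0,\ \lambda_R\notin\{\kappa_1,\dots,\kappa_M\}\}$ and observe that $\theta(\lambda_R-\kappa_j)$ is locally constant there, so $\partial_{\bar\lambda}\mathcal G_d=0$ pointwise and only $\mathcal G_c$ contributes. The paper instead computes $\partial_{\bar\lambda}G_c$ and $\partial_{\bar\lambda}G_d$ distributionally on the full half-plane $\{\lambda_I\neq 0\}$, obtaining in \emph{each} a term proportional to $\sum_j\delta(\lambda_R-\kappa_j)$ (from the Heaviside in $G_d$, and from the pole of $\xi(x',\mu)$ crossing the integration contour in $G_c$), and then exhibits their exact cancellation; see \eqref{E:pm} and \eqref{E:dbar-g-d}. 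Your route is cleaner but, as written, establishes the identity only off the lines $\lambda_R=\kappa_j$. To extend to all of $\{\lambda_I\neq 0\}$ you would need to know that $G$ itself is continuous across those lines (so that no delta can hide there); this is precisely what the paper's cancellation verifies. Adding a sentence to that effect---or invoking that both sides of the asserted identity are continuous on $\{\lambda_I\neq 0\}$ and agree on an open dense subset---would close the point. Your treatment of the boundary at $s=\pm\infty$ via the Gaussian factor is fine.
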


\begin{proof}  Via the same argument as that in the proof in \cite[Lemma 2.4]{Wu19}, one obtains 
\begin{align}
 \partial_{\overline\lambda}  G _c(x,x', \lambda)
=&-\frac {\mbox{sgn}(\lambda_I)}{2\pi i}e^{[\overline\lambda-\lambda](x_1-x_1')+[\overline\lambda ^2-\lambda^2](x_2-x_2')} \chi(x,\overline\lambda )\xi(x',\overline\lambda)\label{E:pm}\\
+&\frac 1{2 }\theta(x'_2-x_2)\sum_{j=1}^M     e^{-\lambda(x_1-x_1') -\lambda ^2 (x_2-x_2')} \varphi_j(x)\psi_j(x')\delta(\lambda_R-\kappa_j)\nonumber ,\nonumber\\
 \partial_{\overline\lambda}  G _d(x,x', \lambda)
=\ & -\frac 12\theta(x_2'-x_2)\sum_{j=1}^M  {e^{ -\lambda (x_1-x_1')-\lambda^2 (x_2-x_2') }}  
\varphi_j(x)\psi_j(x')\delta(\lambda_R-\kappa_j) .\label{E:dbar-g-d} 
\end{align}Hence follows the lemma.
\end{proof}

Based on the characterization of the Green function $G$, we can provide analytic constraints, algebraic constraints, and  the $\overline\partial$ data of $m$ in Theorem \ref{T:KP-eigen-existence} and \ref{T:sd-continuous}. 

\begin{theorem} \label{T:KP-eigen-existence}  
 Let $u(x)=u_0(x)+v_0(x)$, $u_0(x)$ a totally positive $Gr(N, M)_{> 0}$ KP soliton, $v_0(x)$ a
real-valued function  with $\partial^k_xv_0\in L^1\cap L^\infty$ for $|k|\le 2$ and $|v_0|_{{L^1\cap L^\infty}}\ll 1$. Then the eigenfunction  $m(x, \lambda)$ derived from Theorem  \ref{E:exitence-spectral}  satisfies  
\begin{align}
(i)\ &  m(x,\lambda)=\overline{m(x,\overline\lambda)};\label{E;m-reality}\\
(ii)\ &   { |(1- E_{0 }  )  m |_{L^\infty}\le C} ;\label{E:bdd}\\
(iii)\ &  \textit{near the pole, namely, $\lambda\in D_0^\times$,}\label{E:pole-m} \\
& m(x,\lambda)=  { \sum_{n=1}^N\frac{ m_{\res,n}(x )}{\lambda ^n} } +  m_{0,r}(x, \lambda),
  \ \ 
  m_{\res,n}(x )\in \RR,\nonumber\\
& | \frac{ m_{\res,n}}{1+|x|^{N-n}}|_{L^\infty {(D_{0})}},\,
|\frac{ m_{0,r} }{1+|x|^{N}} |_{L^\infty {(D_{0})}} \le C  |v_0|_{L^1\cap L^\infty} , \ 1\le n\le N  ;\nonumber\\
(iv)\ &\textit{near the discontinuities $\kappa_j$, namely, $\lambda\in D_{\kappa_j}^\times$,} \label{E:discon}\\
&m(x,\kappa_j+0^+e^{i\alpha})\equiv m(x,\lambda_0)=\frac{\Theta_j(x)}{1-\gamma_j \cot^{-1}\frac{\lambda_{0,R}-\kappa_j}{|\lambda_{0,I}|} }  ,\nonumber\\  
& |  m   |_{L^\infty {(D_{\kappa_j})}},\   |\frac{\frac\partial{\partial s}  m (x,\lambda) }{1+|x |}|_{L^\infty {(D_{\kappa_j})}}\le C |v_0|_{L^1\cap L^\infty}, \ 1\le j\le M,
\nonumber\\
& \Theta_j(x)=  [1+{\mathfrak G}_j(x,x')\ast  v_0(x')]^{-1}\chi_j(x'),\  
 \gamma_j =  -\frac 1\pi\iint\xi_j(x )v_0(x )\Theta_j(x)dx.\nonumber
\end{align}
Moreover, the Jost function $\Phi$ satisfies the algebraic constraints  \cite{BP214}
\beq\label{E:m-sym}
\Phi(x,\kappa)  {\mathcal D}^\flat=0 ,
\eeq where
\begin{align}
&\Phi(x,\kappa)= (\Phi_1(x), \cdots, \Phi_M (x)),\,
\Phi_j  (x)=  e^{\kappa _jx_1+ \kappa _j ^2x_2}m  (x, \kappa^+_j)  , \,\kappa^+_j=\kappa_j+0^+,\label{E:m-sym-cond}\\
 & 
  {\mathcal D}_{ji}^\flat= \mathcal D_{ji}+\sum_{l=j}^M\frac{c_{jl}\mathcal D_{li}}{ 1-c_j} ,\ 
 1\le  j\le M,\ \ 1\le i\le N\nonumber\\
& c_{jl}=-\int v_0(x)\Psi _j(x)\varphi_l(x )dx,\ c_j=c_{jj},\,
\Psi _j(x)=res_{\lambda\to\kappa_j^+}\Psi (x,\lambda) ,\nonumber\\
&\mathcal L^\dagger\Psi (x,\lambda)\equiv \left(\partial_{x_2}+\partial^2_{x_1}+u_0(x)\right)\Psi (x,\lambda)=-v_0(x)\Psi (x,\lambda)  \nonumber
\end{align}and $\mathcal D$ defined by Definition \ref{D:dd'}.

\end{theorem}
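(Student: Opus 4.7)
The proof proceeds by analyzing the integral equation
\[
m(x,\lambda)=\chi(x,\lambda)-G\ast v_0 m(x,\lambda)
\]
from \eqref{E:spectral-integral}, and then extracting each of the four analytic properties plus the algebraic symmetry as a structural consequence.

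First, for (i), I would apply complex conjugation to the integral equation, use the reality of $v_0$ together with the symmetry $G(x,x',\lambda)=\overline{G(x,x',\overline\lambda)}$ from \eqref{E:green-sym} and the analogous reality $\chi(x,\lambda)=\overline{\chi(x,\overline\lambda)}$ (immediate from \eqref{E:sato} since the coefficients $\Delta_J(A)$ and $\kappa_j$ are real), and then invoke the uniqueness part of Theorem \ref{E:exitence-spectral} to conclude $m(x,\lambda)=\overline{m(x,\overline\lambda)}$. For (ii), I would work on the complement of $D_0$, where $\chi(x,\lambda)$ is uniformly bounded in $\lambda$, combine this with the $\lambda$-uniform estimate \eqref{E:eigen-green} on $G$, and run a Neumann series in the integral equation whose convergence is guaranteed by the smallness assumption $|v_0|_{L^1\cap L^\infty}\ll 1$.

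Next, for (iii) and (iv), I would substitute the local expansions of $G$ into the integral equation. Near $\lambda=0$, the Sato factor $\chi$ has a pole of order $N$ (see \eqref{E:sato}), and the analogous expansion of $G$ is provided by \eqref{E:G-expansion}; matching Laurent coefficients identifies $m_{\res,n}(x)$ as convolutions of $v_0$ with $G_0^{(m)}$ and the $\kappa_j$-values of $\chi$, giving the weighted pointwise bounds listed. The reality of $m_{\res,n}$ then follows from (i). Near $\lambda=\kappa_j$, substitution of \eqref{E:g-asym-pm-i-prep-limit-sym} isolates the $\cot^{-1}$-discontinuity: because the singular part of $G$ at $\kappa_j$ is the rank-one kernel $\tfrac{1}{\pi}\chi_j(x)\xi_j(x')\cot^{-1}\tfrac{\lambda_R-\kappa_j}{|\lambda_I|}$, the integral equation reduces to a scalar self-consistency relation of the form $m=\mathfrak{m}_j+c(\lambda)\chi_j$, which I would solve explicitly to obtain the closed form with $\Theta_j$ and $\gamma_j$. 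The stated $L^\infty$ bounds then follow from estimating the constituent operators using Proposition \ref{P:eigen-green} and the smallness of $v_0$.

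The genuinely hard step is the $\mathcal{D}^\flat$-symmetry \eqref{E:m-sym}. I would start from the Sato symmetry $\varphi(x,\kappa)\mathcal{D}=0$ of Lemma \ref{L:BP3-symmetry} and apply it to the unperturbed limit of the Jost function; then, expressing $\Phi_j-\varphi_j$ via the perturbation equation $\mathcal{L}^\dagger\Psi_j=-v_0\Psi_j$ together with an integration-by-parts/duality pairing against $\varphi_l$, I expect to obtain a finite-rank correction with coefficients $c_{jl}=-\int v_0\,\Psi_j\varphi_l\,dx$. The key algebraic move is to show that the column-operator acting on $\mathcal{D}$ has a triangular structure indexed by $j\le l\le M$ (coming from the residue-ordering at the discontinuities $\kappa_j^+$), so that the correction exponentiates into the form $\mathcal{D}_{ji}^\flat=\mathcal{D}_{ji}+\sum_{l=j}^M \tfrac{c_{jl}\mathcal{D}_{li}}{1-c_j}$. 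Establishing the triangularity — which hinges on how the $\cot^{-1}$-discontinuity structure from step (iv) interacts with the orthogonality relation of Lemma \ref{L:reside} — is where I expect the main technical obstacle to lie; everything else is a routine adaptation of the $\textrm{Gr}(1,2)_{>0}$ argument of \cite{Wu19}.
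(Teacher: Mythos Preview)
Your treatment of (i)--(iv) is essentially the paper's own: reality via \eqref{E:green-sym} and uniqueness, boundedness via the Neumann series for $(1+G\ast v_0)^{-1}$ applied to the bounded $(1-E_0)\chi$, the Laurent expansion at $0$ via the local decomposition \eqref{E:G-expansion} of $G$ (the paper organizes this through a resolvent-identity expansion of $(1+G\ast v_0)^{-1}$ in powers of $\lambda$, but this is what your ``matching Laurent coefficients'' amounts to), and the rank-one reduction at $\kappa_j$ using \eqref{E:g-asym-pm-i-prep-limit-sym}.

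The gap is in your plan for the $\mathcal D^\flat$-symmetry. The mechanism is not a direct comparison of $\Phi_j$ with $\varphi_j$ via duality, and the triangular structure does \emph{not} come from the $\cot^{-1}$-discontinuity of step~(iv). The paper's argument (following \cite{BP214}) runs through the \emph{total Green function} $\mathcal K=(1+\mathcal G\ast v_0)^{-1}\mathcal G$, which satisfies $\overrightarrow{\mathcal L_{v_0}}\mathcal K=\mathcal K\overleftarrow{\mathcal L_{v_0}}=\delta$. The crucial input is the jump relation \eqref{E:g-3-x-fix-theorem} for $\mathcal G$ as $\lambda_R$ crosses $\kappa_j$,
\[
\mathcal G(x,x',\kappa_{j-1}^+)=\mathcal G(x,x',\kappa_j^+)+\varphi_j(x)\psi_j(x'),
\]
which, via the resolvent identities \eqref{E:total-green}, lifts to a rank-one jump for $\mathcal K$:
\[
\mathcal K_{j-1}=\mathcal K_j+\frac{\Phi_j(x)\Psi_j(x')}{1-c_j}.
\]
Telescoping this over consecutive $\kappa_j$'s (mod $M$) gives both the closure relation $\sum_j \Phi_j\Psi_j/(1-c_j)=0$ and the general formula \eqref{E:K-general}. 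Applying $\ast\,\overleftarrow{\mathcal L}\varphi_i$ on the right (so that $\mathcal K_l\ast\overleftarrow{\mathcal L}\varphi_i$ produces $\Phi_i$ plus a finite correction), then contracting with $\mathcal D_{im}$ and invoking the Sato symmetry $\varphi\mathcal D=0$, yields \eqref{E:m-sym}. The triangular range $l=j,\dots,M$ in $\mathcal D^\flat_{ji}$ is thus a consequence of the telescoped jump structure of $\mathcal K$ across the ordered points $\kappa_1<\cdots<\kappa_M$, not of the angular behaviour at each $\kappa_j$. Without the total Green function and its jump relation you will not be able to close the argument.
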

\begin{proof} $\underline{\emph{Step 1  (Proof of \eqref{E;m-reality}-\eqref{E:pole-m})}}:$ The reality condition \eqref{E;m-reality} follows from \eqref{E:spectral-integral}, \eqref{E:green-sym}  and $v_0(x)$ is real-valued.  
   Applying \eqref{E:spectral-integral} and Proposition \ref{P:eigen-green},
\[
|(1-E_0)m(x,\lambda)|=|(1+G\ast v_0)^{-1}(1-E_0)\chi|\le C .
\] So \eqref{E:bdd} is justified. 
Besides,  for $\lambda\in D_0$, \eqref{E:G-expansion} and the resolvent identity imply, there exist operators $P^{(k)}_0(x)$, $P^{(N)}_0(x,\lambda)$,  $0\le k \le N-1$ ,  
\beq\label{E:resolvent-1}
\begin{gathered}
P^{(k)}_0f=\int _{\RR}\wp^{(k)}_0(x,x')f(x')dx',\\
|\frac{\wp^{(k)}_0}{ 1+|x-x'|^k }|,\ |\frac{\wp^{(N)}_0}{\lambda^N( 1+|x-x'|^{N})}|  \le C (1+\frac1{\sqrt{|x_2-x_2'|}} ), 
\end{gathered}\eeq such that
\begin{align}
 & (1+G \ast v_0)^{-1}\label{E:resolvent}\\
 =&(1+[\sum_{k=0}^{N-1}\lambda^k G_0^{(k)}+\omega_0]\ast v_0)^{-1}\nonumber\\
 =&\sum_{k=0}^{N-1} (1+G^{(0)}_0	\ast v_0)^{-1} \left[\Delta G_0^{(0)} \ast v_0 (1+G^{(0)}_0	\ast v_0)^{-1}\right]^k\nonumber\\
 +& \left[\Delta G_0^{(0)} \ast v_0 (1+G^{(0)}_0	\ast v_0)^{-1}\right]^N(1+G \ast v_0)^{-1}\nonumber\\
=&\sum_{k=0}^{N-1}P^{(k)}_0(x) \lambda^k +P^{(N)}_0 (x,\lambda). \nonumber
\end{align}
From \eqref{E:sato}, one can set $\chi(x,\lambda)=\sum_{m=0}^N\dfrac{ \chi_{res,m}(x )}{\lambda ^m}$ for $\chi_{res,m}(x)\in L^\infty$. Along with \eqref{E:resolvent-1}, \eqref{E:resolvent}, and  Proposition \ref{P:eigen-green},
\begin{align*}
 m(x,\lambda) 
=  \left(\sum_{k=0}^{N-1 }P^{(k)}_0 (x) \lambda^k  +P^{(N)}_0 (x,\lambda)\right)  \sum_{m=0}^N\frac{ \chi_{res,m}(x )}{\lambda ^m} ,
\end{align*}  which yields
\beq\label{E:g-decomp}
\begin{split}
 m_{res,n}(x)= &\sum_{l=0}^{N-n}P_0^{(l)} \chi_{res,n+l}(x),\ 1\le n\le N,\\
 m(x,\lambda)= & { \sum_{n=1}^N\frac{ m_{res,n}(x )}{\lambda ^n} } +  m_{0,r}(x, \lambda).
\end{split}
\eeq 
So \eqref{E:pole-m} follows from \eqref{E:resolvent-1} and \eqref{E:g-decomp}.

$\underline{\emph{Step 2  (Proof of \eqref{E:discon} - \eqref{E:m-sym-cond})}}:$ Proof of \eqref{E:discon} can be derived similarly as that in the proof of \cite[Theorem 1, Step 3]{Wu19}. 

The $\mathcal D^\flat$-symmetry  \eqref{E:m-sym} is  crucial for defining the discrete data (norming constants) for perturbed $\mathrm{Gr}(N, M)_{\ge 0}$ solitons when $M>2$. For convenience, we sketch the  proof and refer details to \cite[\S 4]{BP214}.  

First of all,   Proposition \ref{P:eigen-green} implies there exists a unique $\mathcal K(x,x',\lambda)$  satisfying the following integral equations,
\beq\label{E:total-green}
\begin{split}
\mathcal K(x,x',\lambda)=&\mathcal G(x,x',\lambda) -\mathcal G(x,y,\lambda)\ast _{y} v_0(y)\mathcal K(y,x',\lambda)\\
\equiv&\mathcal G-\mathcal G\ast v_0\,\mathcal K,\\
\mathcal K(x,x',\lambda)=&\mathcal G(x,x',\lambda)- \mathcal K (x,y,\lambda)\ast_{y} v_0(y)\mathcal G(y,x',\lambda)\\
\equiv&\mathcal G-\mathcal K\ast v_0\,\mathcal G.
\end{split}
\eeq The function $\mathcal K(x,x',\lambda)$ is called the total Green function since
\beq\label{E:sym-0-K}
\begin{gathered}
\overrightarrow{\mathcal L_{v_0}}\mathcal K  =\mathcal K\overleftarrow{\mathcal L_{v_0}}=\delta(x-x') , 
 \\ 
\mathcal L_{v_0}=\mathcal L+v_0,\ \mathcal L=-\partial_{x_2}+\partial^2_{x_1}+u_0(x).
\end{gathered}
\eeq Here $\overrightarrow {\mathcal L}$ denotes the operator $\mathcal L$ applying to the $x$ variable of $\mathcal K$ and $\overleftarrow{\mathcal L}$ denotes the operator  applying to the $x'$ variable of $\mathcal K$. Moreover, the eigenfunction $\Phi(x ,\lambda)$ and adjoint eigenfunction $\Psi(x ,\lambda)$ defined by \eqref{E:Lax-bdry} and \eqref{E:m-sym-cond} satisfy
\beq\label{E:eigen-adj-eigen}
\begin{split}
\Phi(x,\lambda)=& \mathcal K(x,x',\lambda)\ast_{x'}\overleftarrow{\mathcal L}\varphi (x',\lambda)
\equiv  \mathcal K \ast \overleftarrow{\mathcal L}\varphi ,\\ 
\Psi(x' ,\lambda)=&\psi(x ,\lambda)\ast_{x }\overrightarrow{\mathcal L}\mathcal K (x,x',\lambda)
\equiv  \psi\ast\overrightarrow{\mathcal L}\mathcal K  .
\end{split}
\eeq Taking $\lambda$ limits of \eqref{E:total-green} and \eqref{E:eigen-adj-eigen}   yield
\beq\label{E:eigen-green-lim}
\begin{gathered}
\mathcal K_j=\mathcal G_j-\mathcal G_j\ast v_0\,\mathcal K_j,\\ 
\Phi_j=\mathcal K_j\ast \overleftarrow{\mathcal L}\varphi_j,\quad \Psi _j=\psi_j \ast \overrightarrow{\mathcal L}\mathcal K_j,
\end{gathered}
\eeq where we use the convention $\mathcal G_j=\mathcal G(x,x',\kappa_j^+)$, $\mathcal K_j=\mathcal K(x,x',\kappa_j^+)$, and $\Phi_j=\Phi(x,\kappa_j^+)$. 

Using \eqref{E:total-green}-\eqref{E:eigen-green-lim}, \eqref{E:g-3-x-fix-theorem}, and
\begin{align*}
 c_{jl}=&-\int v_0(x)\Psi _j(x)\varphi_l(x )dx=\int \psi_j(x)\overleftarrow{\mathcal L} \varphi_l(x )dx= \Psi_j\ast  \overleftarrow{\mathcal L}\varphi_l,
\end{align*}one can derive
\beq\label{E:K-3-x-fix-theorem} 
\mathcal K_{j-1}=\mathcal K_j+\frac{\Phi_j(x)\Psi_j(x')}{1-c_j},
\eeq which implies
\beq\label{E:K-general-sum} 
\sum_{j= 1}^{ M}\frac{\Phi_j(x)\Psi_j(x')}{1-c_j}=0.
\eeq and
\beq\label{E:K-general} 
\mathcal K_{l}=\mathcal K_i+\sum_{j=l+1}^{i+M}\frac{\Phi_j(x)\Psi_j(x')}{1-c_j}.
\eeq Here $
c_j= c_{jj}$ and the mod $M$ - condition is adopted. 

Applying $  \overleftarrow{\mathcal L}\varphi_i$ to \eqref{E:K-general} from the right and using \eqref{E:eigen-green-lim}, we obtain
\beq\label{E:K-phi-general} 
\mathcal K_l \ast \overleftarrow{\mathcal L}\varphi_i=\Phi_i+\sum_{j=l+1}^{i+M}\frac{\Phi_j(x)c_{ji}}{1-c_j}.
\eeq  Summing \eqref{E:K-phi-general} up with the matrix $\mathcal D_{im}$ and using the symmetry \eqref{E:D-sym}, we derive
\beq\label{E:higher-kdv-sym}
\sum_{i=1}^{M} \Phi_i\mathcal D_{im}+\sum_{i=1}^{M}\sum_{j=l+1}^{i+M}\frac{\Phi_j(x)c_{ji}\mathcal D_{im}}{1-c_j}=0.
\eeq Taking $l=M$ in \eqref{E:higher-kdv-sym} and using \eqref{E:K-general-sum},   we obtain \eqref{E:m-sym}.

\end{proof}

We emphasize that the remarkable $\mathcal D^\flat$-symmetry \eqref{E:m-sym}, 
introduced by Boiti, Pempenelli, and Pogrebkov \cite{BP214}, depends not only the data $\kappa_1<\cdots<\kappa_M$,  
 $A=(a_{ij})\in\mathrm{Gr}(N, M)_{\ge 0}$, but also on the perturbation $v_0(x)$. In case $(N,M)=(1,2)$, it defines a different  but equivalent symmetry from the KdV symmetry \cite[Theorem 2, Theorem 4, (3.5), (3.7), (3.40)]{Wu18}, \cite[(3.4)]{Wu19}. 

\begin{theorem}\label{T:sd-continuous}   

 Let $u(x)=u_0(x)+v_0(x)$, $u_0(x)$ a totally positive $Gr(N, M)_{> 0}$ KP soliton, $v_0(x)$ a
real-valued function  with $\partial^k_xv_0\in L^1\cap L^\infty$ for $|k|\le 2$ and $|v_0|_{{L^1\cap L^\infty}}\ll 1$. Then the solution  $m(x, \lambda)$ obtained from Theorem \ref{E:exitence-spectral}  satisfies
\beq\label{E:conti}
\partial_{\overline\lambda}m(x, \lambda)
=  s_c(\lambda) e^{(\overline\lambda-\lambda)x_1+(\overline\lambda^2-\lambda^2)x_2  }m{(x, \overline\lambda)} , \ \lambda_I\ne 0,
\eeq with
\beq\label{E:conti-sd}
\begin{gathered}
  s_c(\lambda) = \frac {\sgn(\lambda_I)}{2\pi i} \iint e^{-[(\overline\lambda-\lambda)x_1+(\overline\lambda^2-\lambda^2)x_2 ] } \xi(x, \overline\lambda) v_0(x)m(x, \lambda)dx ,\\
s_c(\lambda)=\overline{s(\overline\lambda)}   .
\end{gathered}
\eeq  
 Moreover, 
\beq
| (1-E_{\cup_{1\le j\le M} D_{\kappa_j}}  )s_c  |_{   L^2(|\lambda_I| d\overline\lambda \wedge d\lambda)\cap L^\infty}  \le  C\sum_{|k|\le 2} |\partial_x^kv_0| _{L^1\cap L^\infty},
\label{E:pm-i-new}  
\eeq
and 
\begin{gather} 
  s_c(\lambda)=
\left\{
{\ba{ll}
 \frac{ \frac {i}{ 2} \sgn(\lambda_I)}{\overline\lambda-\kappa_j}\frac {\gamma_j}{1-\gamma_j { \cot^{-1}\frac { \lambda_R-\kappa_j}{|\lambda_I|}}}+\sgn(\lambda_I)h_j(\lambda),&\lambda\in D^ \times_{ \kappa_j },\\
\sgn(\lambda_I) {\hbar_0}(\lambda),&\lambda\in  D^\times _{ 0},
\ea}
\right.
\label{E:cd-decomposition} 
\end{gather}
for $1\le j\le M$, with
\beq \label{E:cd-decomposition-new}
\begin{gathered}
 |\gamma_j|_{L^\infty}\le |v_0|_{L^1}, \ 
       \sum_{0\le |l|\le 1}(| \partial_{\lambda_R}^{l_1}\partial_{\lambda_I}^{l_2}h_j|_{L^\infty}+|\partial_{\lambda_R}^{l_1} \partial_{\lambda_I}^{l_2}\hbar_0|_{L^\infty})\le C |(1+|x|^2)     v_0|_{L^1\cap L^\infty},   \\      
 h_j(\lambda)=-\overline{h_j( \overline\lambda)}, \ 
 \hbar_0(\lambda)=-\overline{\hbar_0( \overline\lambda)} . 
\end{gathered}
\eeq  
  
\end{theorem}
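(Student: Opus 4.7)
The strategy is to obtain the $\overline\partial$-equation \eqref{E:conti} by differentiating the integral equation \eqref{E:spectral-integral} in $\overline\lambda$. Since $\chi(x,\lambda)$ is rational in $\lambda$ (holomorphic away from $\lambda=0$) and $v_0$ is $\lambda$-independent, one gets
\begin{equation*}
(1 + G\ast v_0)\,\partial_{\overline\lambda}m(x,\lambda) = -(\partial_{\overline\lambda}G)\ast v_0\, m(x,\lambda)
\end{equation*}
for $\lambda_I\ne 0$. Substituting Lemma \ref{L:cont-debar} and pulling the $x$-dependent exponential outside the $x'$-integral identifies the right-hand side as $s_c(\lambda)\,\widetilde\chi(x,\lambda)$, where $\widetilde\chi(x,\lambda)=\chi(x,\overline\lambda)\,e^{(\overline\lambda-\lambda)x_1+(\overline\lambda^2-\lambda^2)x_2}$ and $s_c$ is exactly \eqref{E:conti-sd}. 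I would then verify that $\widetilde m(x,\lambda) := e^{(\overline\lambda-\lambda)x_1+(\overline\lambda^2-\lambda^2)x_2}\,m(x,\overline\lambda)$ is the unique $L^\infty$ solution of $(1+G\ast v_0)\widetilde m = \widetilde\chi$: a direct PDE check gives $L\widetilde m = -v_0\widetilde m$; the prefactor has modulus one since $\overline\lambda-\lambda=-2i\lambda_I$ and $\overline\lambda^2-\lambda^2=-4i\lambda_I\lambda_R$; and $\widetilde m-\widetilde\chi=e^{(\overline\lambda-\lambda)x_1+(\overline\lambda^2-\lambda^2)x_2}(m(\cdot,\overline\lambda)-\chi(\cdot,\overline\lambda))$ vanishes at infinity by \eqref{E:ast}. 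Uniqueness of the integral equation under the smallness of $v_0$ (Theorem \ref{E:exitence-spectral}) then forces $\partial_{\overline\lambda}m = s_c(\lambda)\widetilde m$, which is \eqref{E:conti}.

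The reality relation $s_c(\lambda)=\overline{s_c(\overline\lambda)}$ follows from $\overline{\xi(x,\overline\lambda)}=\xi(x,\lambda)$ (rational in $\lambda$ with real coefficients), $\overline{m(x,\lambda)}=m(x,\overline\lambda)$ by \eqref{E;m-reality}, and the reality of $v_0$. For the estimate \eqref{E:pm-i-new}, the $L^\infty$ bound away from $\{\kappa_j\}$ is immediate: $\xi(x,\overline\lambda)$ and $m(x,\lambda)$ are bounded by \eqref{E:bdd} and the Sato expression \eqref{E:pole}, so $|s_c|\le C|v_0|_{L^1}$. For the weighted $L^2$ bound I would rewrite the exponential as $e^{i(2\lambda_I x_1+4\lambda_I\lambda_R x_2)}$, so $s_c(\lambda)$ is, up to signs, the Fourier transform in $x$ of $\xi(x,\overline\lambda)v_0(x)m(x,\lambda)$ evaluated at frequencies $(\eta_1,\eta_2)=(2\lambda_I,4\lambda_I\lambda_R)$. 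The Jacobian of $(\lambda_R,\lambda_I)\mapsto(\eta_1,\eta_2)$ is $8|\lambda_I|$, so $|\lambda_I|\,d\overline\lambda\wedge d\lambda$ becomes a constant multiple of $d\eta_1\,d\eta_2$, and Plancherel reduces the weighted $L^2_\lambda$-norm of $s_c$ to a bound on $\xi\,v_0\,m$ in $L^\infty_\lambda L^2_x$, controlled by $|v_0|_{L^1\cap L^\infty}$ using the regularity assumption $\partial_x^k v_0\in L^1\cap L^\infty$, $|k|\le 2$.

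For the decomposition \eqref{E:cd-decomposition} near $\kappa_j$ I would insert the Laurent expansion $\xi(x,\overline\lambda)=\xi_j(x)/(\overline\lambda-\kappa_j)+\xi_{j,r}(x,\overline\lambda)$ together with \eqref{E:discon} into the integral defining $s_c$. The leading singular piece is
\begin{equation*}
\frac{\sgn(\lambda_I)}{2\pi i\,(\overline\lambda-\kappa_j)}\cdot\frac{1}{1-\gamma_j\cot^{-1}\frac{\lambda_R-\kappa_j}{|\lambda_I|}}\iint e^{-[(\overline\lambda-\lambda)x_1+(\overline\lambda^2-\lambda^2)x_2]}\xi_j(x)v_0(x)\Theta_j(x)\,dx.
\end{equation*}
Splitting $e^{-[\cdots]}=1+(e^{-[\cdots]}-1)$ and using the identity $\iint\xi_j v_0\Theta_j\,dx=-\pi\gamma_j$ from \eqref{E:discon} produces the stated singular term. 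All other contributions (from $\xi_{j,r}$, the regular part of $m$, and the correction $e^{-[\cdots]}-1$) go into $\sgn(\lambda_I)h_j(\lambda)$, whose first $\lambda$-derivatives pick up at most a linear factor of $x$ from the exponential, yielding \eqref{E:cd-decomposition-new} with weight $(1+|x|^2)v_0$. Near $\lambda=0$ I would similarly use \eqref{E:pole-m} and the $N$-fold zero of $\xi$ at the origin; the products $\overline\lambda^{k}/\lambda^{n}$ with $k\ge N$, $1\le n\le N$ are bounded in modulus on $D_0^\times$, so $\hbar_0$ and its first derivatives inherit $L^\infty$ bounds from the coefficients $m_{\res,n}$, $m_{0,r}$ weighted against $(1+|x|^2)v_0$.

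The main obstacle is proving the $L^\infty$ estimate on $\partial_{\lambda_R}\hbar_0$, $\partial_{\lambda_I}\hbar_0$ near $\lambda=0$. The top contribution to $\xi(x,\overline\lambda)m(x,\lambda)$ is of the form $(\overline\lambda/\lambda)^N$ times an $x$-integrated coefficient: in polar form $(\overline\lambda/\lambda)^N=e^{-2iN\theta}$ has modulus one but its $(\lambda_R,\lambda_I)$-derivatives diverge like $|\lambda|^{-1}$. To prevent this from contaminating $\partial\hbar_0$, one must exploit a cancellation between the leading Taylor coefficient of $\xi$ at $0$ and the top residue $m_{\res,N}(x)$ of $m$, ultimately traceable to the orthogonality relation of Lemma \ref{L:reside} and the $\mathcal D$-symmetries of Lemma \ref{L:BP3-symmetry}. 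Adapting the $\mathrm{Gr}(1,2)_{>0}$ argument of \cite{Wu19} to the higher-multiplicity pole at $\lambda=0$ of a general $\mathrm{Gr}(N,M)_{>0}$ soliton is where the heart of the technical work lies.
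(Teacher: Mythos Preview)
Your proposal is correct and follows precisely the approach the paper invokes: the paper's own proof is a one-line reference to \cite[Theorem 2, (3.10)--(3.12)]{Wu19}, and your sketch---differentiating \eqref{E:spectral-integral} in $\overline\lambda$, inserting Lemma \ref{L:cont-debar}, using the symmetry $\mathcal G(x,x',\overline\lambda)=\mathcal G(x,x',\lambda)$ together with uniqueness in Theorem \ref{E:exitence-spectral} to identify $\partial_{\overline\lambda}m$ with $s_c\,e^{(\overline\lambda-\lambda)x_1+(\overline\lambda^2-\lambda^2)x_2}m(\cdot,\overline\lambda)$, then reading off the estimates via the Fourier/Plancherel change of variables and the Laurent expansions of $\xi$ and $m$---is exactly that argument. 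Your flagged obstacle about $\partial_{\lambda_R}\hbar_0$, $\partial_{\lambda_I}\hbar_0$ near $\lambda=0$ is a genuine technical point in the higher-multiplicity setting; note, however, that the paper ultimately circumvents it for the subsequent analysis by passing to the renormalized $\widetilde m$ and $\widetilde s_c$ (Definition \ref{D:reg-m}, Theorem \ref{T:normal-eigen}), where only simple poles at the $z_n$ remain and the $N=1$ estimate from \cite{Wu19} applies directly.
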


\begin{proof}  The theorem can be proved by  the same approach as that in \cite[Theorem 2, (3.10)-(3.12)]{Wu19}.
\end{proof}

One can adapt the approach as that in \cite[Theorem 3, Theorem 4]{Wu19} to derive the following Cauchy integral equation for $m(x,\lambda)$,
\beq\label{E:cauchy-integral-equation-2-pole}
\begin{array}{cl}
  {  m}(x, \lambda) =1+\sum_{n=1}^N\frac{  m_{ {\res},n }(x  ) }{\lambda^n  }  +\mathcal CT
   m , & \forall\lambda\ne  0.
\end{array}
\eeq 
Nevertheless, if we use the above Cauchy integral equation to solve the inverse problem, then the multiple pole property at $0$ could form an obstruction.   Since it yields a singular integral with a kernel blowing up of order $N+1$ at $0$ and causes troubles when $N\ge 2$. To remedy the situation, we introduce 
\begin{definition}\label{D:reg-m} A regularization eigenfunction is defined by
\beq\label{E:regularized-m}
\begin{gathered}
\widetilde m (x, \lambda)=\frac {(\lambda-z_1)^{N-1}}{\Pi_{2\le n\le N}(\lambda- z_n)}m(x ,\lambda),\\  
 z_1=0,\, z_n=\frac{n-1}N  a,\,2\le n\le N,\\
 \tilde a= a/N, \ \widetilde D_{z} =D_{z,\tilde  a},\ \widetilde  D^ \times_{z}=\widetilde D_{z}/\{z\},
\end{gathered}
\eeq where $ a$ and $D_{z,r}$ are defined by Definition \ref{D:terminology}.
\end{definition}

\begin{theorem}\label{T:normal-eigen} If $u(x)=u_0(x)+v_0(x)$, $u_0(x)$ is a totally positive $Gr(N, M)_{> 0}$ KP soliton, $v_0(x)$ a real-valued function with $\partial_x^kv_0 \in  {L^1\cap L^\infty}$, $   |k|\le 2 $, $|v_0|_{{L^1\cap L^\infty}}\ll 1$, then for fixed $\lambda\in \CC \backslash\{ z_n,\kappa_j\}$, $1\le n\le N,\ 1\le j\le M$,   there is a   unique solution  $\widetilde m(x, \lambda)$ to the spectral equation
\beq\label{E:reg-direct-spectral}
\left\{
{\ba{ll}
 L\widetilde m(x,\lambda)=- v_0(x)\widetilde m(x,\lambda),& \\
\lim_{|x|\to\infty}  ( \widetilde  m(x,\lambda)-\frac {(\lambda-z_1)^{N-1}}{\Pi_{2\le n\le N}(\lambda- z_n)}\chi(x,\lambda))=0, 
\ea}
\right.
\eeq  and    
\begin{align}
(i)\ &\widetilde m(x, \lambda)=\overline{\widetilde m(x, \overline\lambda)};\label{E:tilde-reality}\\
(ii)\ &  |(1- E_{\cup_{1\le n\le N} \widetilde D_{z_n}  } \widetilde m(x, \lambda)|\le C; \label{E:define-remainder-m-sf-pm2-0}\\
(iii)\ &\textit{near the poles $z_n$, namely, $\lambda\in\widetilde  D^ \times_{z_n}$, $ 1\le n\le N$,}\label{E:define-remainder-m-sf-pm2} \\
&\widetilde m (x, \lambda)  
 =    \frac{\widetilde m_{z_n,\res}(x)}{\lambda-z_n}  +\widetilde m_{z_n,r}(x, \lambda),
 \ 
\widetilde m_{z_n,\res}(x)\in  \RR, \nonumber\\
&|\widetilde m_{z_n,\res}(x)|_{L^\infty}, \ 
|(\lambda-z_n)\widetilde m_{z_n,r}(x, \lambda) |_{L^\infty(\widetilde D_{z_n})},\ | \frac{\widetilde m_{z_n,r}(x, \lambda)}{ 1+|x| } |_{L^\infty(\widetilde D_{z_n})}\nonumber\\
&\hskip3in\le  C |v_0|_{L^1\cap L^\infty} ;   
\nonumber\\
(iv)\ &\textit{near the discontinuities, namely, $\lambda\in\widetilde  D^ \times_{\kappa_j}$, $ 1\le j\le M$,}\label{E:define-remainder-m-sf-new}\\
&{\widetilde m}(x, \kappa_j+0^+e^{i\alpha}) 
\equiv {\widetilde m}(x, \lambda_0)  
 =  \frac{  \frac{(\kappa_j-z_1)^{N-1}}{\Pi_{2\le n\le N}(\kappa_j-z_n)}\Theta_j(x)}{1-\gamma_j\cot^{-1}\frac {\lambda_{0,R}-\kappa_j}{|\lambda_{0,I|}}}, \nonumber\\
&|\widetilde m   |_{L^\infty(\widetilde D_{\kappa_j})},\,
|\frac{\frac\partial{\partial s} \widetilde m (x, \lambda)}{1+|x| } |_{L^\infty(\widetilde D_{\kappa_j})}\le C |v_0|_{L^1\cap L^\infty},\,  
 \textit{$\Theta_j$, $\gamma_j$ defined by \eqref{E:discon}. }
 \nonumber
    \end{align}
 Besides, $ (\widetilde\Phi_1(x), \cdots, \widetilde\Phi_M  (x))  {\mathcal D}^\sharp =0$ with  
 \beq\label{E:sharp}
 \begin{gathered}
 \widetilde\Phi_j  (x)=e^{\kappa _jx_1+ \kappa _j ^2x_2}\widetilde m  (x, \kappa^+_j) ,\\
 {\mathcal D}^\sharp=\textit{diag}\,(\frac{\Pi_{2\le n\le N}(\kappa_1-z_n)}{(\kappa_1-z_1)^{N-1}}, \cdots, \frac{\Pi_{2\le n\le N}(\kappa_M-z_n)}{(\kappa_M-z_1)^{N-1}})\mathcal D^\flat.
 \end{gathered}\eeq 
So  there exists uniquely  $b\in GL(N\times N)$ such that 
\begin{gather}
\mathcal  D^\sharp  \times b=  \widetilde {\mathcal  D }=\left( \ba{ccc}
{ \kappa_1^N}&{ \cdots}&{ 0 }\\
{ \vdots}&{ \ddots} &{ \vdots}\\
{ 0}&{ \cdots} & \kappa_N^N \\
\widetilde {\mathcal D}_{N+1,1}&\cdots &\widetilde {\mathcal D}_{N+1,N}\\
\vdots &\ddots &\vdots\\
\widetilde {\mathcal D}_{M,1}&\cdots &\widetilde {\mathcal D}_{M,N}
\ea\right), 
\label{E:reg-m-sym}\\
 (\widetilde\Phi_1(x), \cdots, \widetilde\Phi_M  (x))\widetilde{\mathcal D} =0.
\label{E:reg-m-sym-1}
\end{gather}

Moreover,  
\beq\label{E:dbar-m-sf}
\begin{gathered}
  \partial_{\overline\lambda}{\widetilde m}(x,y,\lambda) 
=  \widetilde {  s}_c(\lambda)e^{(\overline\lambda-\lambda)x_1+(\overline\lambda^2-\lambda^2)x_2}\widetilde m(x, \overline\lambda),\ \lambda_I\ne 0,\\
 {\widetilde s}_c(\lambda)=\frac {(\lambda-z_1)^{N-1}\Pi_{2\le n\le N}(\overline\lambda-z_n)}{(  \overline\lambda-z_1)^{N-1}\Pi_{2\le n\le N}(\lambda-z_n)   } 
   s_c(\lambda),\quad \widetilde s_c(\lambda)=\overline{\widetilde s(\overline\lambda)}
\end{gathered}
\eeq with
\beq\label{E:tilde-dbar-m-sf-new-0}
\begin{gathered}
{|{(1-E_{ \cup_{1\le j\le M}\widetilde D_{\kappa_j}}  ) {\widetilde s}_c(\lambda)}|_{   L^2(|\lambda_I| d\overline\lambda \wedge d\lambda)\cap L^\infty} }\le  C\sum_{|k|\le 2}|\partial_x^ kv_0|_{ L^1\cap L^\infty},
\\
 \widetilde s_c(\lambda)=
\left\{
{\ba{ll}
 \frac{ \frac {i}{ 2} \sgn(\lambda_I)}{\overline\lambda-\kappa_j}\frac {\gamma_j}{1-\gamma _j{ \cot^{-1}\frac { \lambda_R-\kappa_j}{|\lambda_I|}}}+\sgn(\lambda_I)\widetilde h_j(\lambda),&\lambda\in \widetilde D^ \times_{ \kappa_j },\ 1\le j\le M,\\
\sgn(\lambda_I) {\widetilde \hbar_n}(\lambda),&\lambda\in  \widetilde D^\times _{ z_n},\ 1\le n\le N,
\ea}
\right.\\
|\gamma_j|_{L^\infty}\le |v_0|_{L^1}, \ 
        \sum_{0\le |l|\le 1} (|\partial_{\lambda_R}^{l_1}\partial_{\lambda_I}^{l_2}\widetilde h_j|_{L^\infty}+  | \partial_{\lambda_R}^{l_1}\partial_{\lambda_I}^{l_2}\widetilde \hbar_n|_{L^\infty})\le C |(1+|x|)     v_0|_{L^1\cap L^\infty},     
 \\
\widetilde h_j(\lambda)=-\overline{\widetilde h_j( \overline\lambda)},\ \widetilde \hbar  _n(\lambda)=-\overline{\widetilde \hbar  _n( \overline\lambda)} .
\end{gathered}\eeq
 
\end{theorem}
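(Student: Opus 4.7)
The plan is to deduce Theorem~\ref{T:normal-eigen} as a transparent corollary of Theorems~\ref{E:exitence-spectral}, \ref{T:KP-eigen-existence} and \ref{T:sd-continuous} by pushing all structural information about $m(x,\lambda)$ through the rational, $x$-independent multiplier
\[
\mu(\lambda):=\frac{(\lambda-z_1)^{N-1}}{\prod_{n=2}^{N}(\lambda-z_n)},
\qquad \widetilde m(x,\lambda)=\mu(\lambda)\,m(x,\lambda).
\]
Because $\mu$ depends only on $\lambda$, the heat equation $L\widetilde m=-v_0\widetilde m$ in \eqref{E:reg-direct-spectral} is inherited verbatim from \eqref{E:renormal}, and the boundary condition is read off from the Sato normalization of $\chi$. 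Hence unique solvability of \eqref{E:reg-direct-spectral} is equivalent to that of \eqref{E:Lax-bdry}, which was supplied by Theorem~\ref{E:exitence-spectral}.

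For the pointwise properties, the reality \eqref{E:tilde-reality} follows from $\mu(\overline\lambda)=\overline{\mu(\lambda)}$ (since each $z_n\in\RR$) combined with \eqref{E;m-reality}. The Laurent expansion \eqref{E:pole-m} of $m$ at $z_1=0$ is an $N$-th order pole; the numerator $\lambda^{N-1}$ in $\mu$ is calibrated precisely to reduce this to a simple pole, whose residue $\widetilde m_{z_1,\res}$ is a non-zero real multiple of $m_{\res,N}$. At each $z_n$ with $n\ge 2$ the function $m(x,\cdot)$ is analytic by \eqref{E:bdd}, so $(\lambda-z_n)^{-1}$ produces a simple pole with residue proportional to $m(x,z_n)$, yielding \eqref{E:define-remainder-m-sf-pm2-0}-\eqref{E:define-remainder-m-sf-pm2}. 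Finally, the choices $\tilde a=a/N$ and $a<|\kappa_j-\kappa_{j'}|$ in Definitions~\ref{D:terminology}, \ref{D:reg-m} guarantee $\kappa_j\notin\{z_n\}$, so $\mu$ is holomorphic and non-vanishing on $\widetilde D_{\kappa_j}$; the discontinuity formula \eqref{E:discon} for $m$ is therefore multiplied by the constant $\mu(\kappa_j)$, producing \eqref{E:define-remainder-m-sf-new}.

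For the algebraic constraint, set $\widetilde\Phi_j=\mu(\kappa_j)\Phi_j$ as in \eqref{E:sharp}. The $\mathcal D^\flat$-symmetry \eqref{E:m-sym} of $\Phi(x,\kappa)$ then gives $(\widetilde\Phi_1,\ldots,\widetilde\Phi_M)\mathcal D^\sharp=0$. To produce $b\in GL(N,N)$ with $\mathcal D^\sharp b=\widetilde{\mathcal D}$ in the form \eqref{E:reg-m-sym}, I will verify that the top $N\times N$ block of $\mathcal D^\sharp$ is invertible: the corresponding block of $\mathcal D=\textrm{diag}(\kappa_j^N)A^T$ equals $\textrm{diag}(\kappa_1^N,\ldots,\kappa_N^N)(A_{\mathrm{left}})^T$, and the totally positive hypothesis $A\in\mathrm{Gr}(N,M)_{>0}$ forces $\Delta_{1,\ldots,N}(A)=\det A_{\mathrm{left}}>0$; the $\mathcal D\to\mathcal D^\flat$ correction is controlled by $C|v_0|_{L^1\cap L^\infty}\ll 1$, so invertibility of the top block survives the perturbation. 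Letting $B_0$ denote its inverse and taking $b=B_0\,\textrm{diag}(\kappa_1^N,\ldots,\kappa_N^N)$ produces the canonical form \eqref{E:reg-m-sym}, after which \eqref{E:reg-m-sym-1} is immediate.

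For the $\overline\partial$ equation, holomorphy of $\mu$ in $\lambda$ reduces \eqref{E:dbar-m-sf} to an algebraic substitution: $\partial_{\overline\lambda}\widetilde m=\mu(\lambda)\partial_{\overline\lambda}m$; insert \eqref{E:conti} and rewrite $m(x,\overline\lambda)=\mu(\overline\lambda)^{-1}\widetilde m(x,\overline\lambda)$ to obtain $\widetilde s_c=\bigl(\mu(\lambda)/\mu(\overline\lambda)\bigr)s_c$, whence the reality follows from $\mu(\overline\lambda)=\overline{\mu(\lambda)}$. The global $L^2\cap L^\infty$ bound in \eqref{E:tilde-dbar-m-sf-new-0} is immediate since $|\mu(\lambda)/\mu(\overline\lambda)|=1$ away from $\{z_n\}$ and $s_c$ satisfies \eqref{E:pm-i-new}. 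The local decompositions at $\kappa_j$ and at $z_n$ then follow from \eqref{E:cd-decomposition} together with explicit evaluation of $\mu(\lambda)/\mu(\overline\lambda)$ near each point. The main technical obstacle I anticipate is the derivative bound on $\widetilde\hbar_n$ near $z_n$ with $n\ge 2$: the factor $(\overline\lambda-z_n)/(\lambda-z_n)=e^{-2i\arg(\lambda-z_n)}$ is bounded in modulus but has first $\lambda_R,\lambda_I$ derivatives of size $|\lambda-z_n|^{-1}$, so the stated $L^\infty$ control on $\partial^l\widetilde\hbar_n$ will require extracting a matching Taylor vanishing of $\hbar_0(\lambda)$ at $\lambda=z_n$ directly from the integral representation \eqref{E:conti-sd}, using integration by parts against the exponential factor to trade a power of $(\lambda-z_n)$ for an extra $x$-moment of $v_0$; this is precisely the mechanism by which the weight on the right of \eqref{E:tilde-dbar-m-sf-new-0} arises.
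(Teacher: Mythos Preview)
Your overall strategy—transport every property of $m$ through the rational multiplier $\mu(\lambda)$—is correct and matches the paper's stance that all items except \eqref{E:define-remainder-m-sf-pm2} ``can be derived directly from Theorem~\ref{T:KP-eigen-existence}''. Existence and uniqueness, reality, boundedness away from the $z_n$, the discontinuity formula at $\kappa_j$, the $\mathcal D^\sharp$-symmetry (your invertibility argument for the top $N\times N$ block is a nice detail the paper omits), and the $\overline\partial$ relation all transfer as you describe.

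There is, however, a genuine gap precisely at the point the paper singles out: the remainder bound $|\widetilde m_{z_n,r}(x,\lambda)|\le C(1+|x|)\,|v_0|_{L^1\cap L^\infty}$ in \eqref{E:define-remainder-m-sf-pm2}. Your plan is to feed the Laurent data \eqref{E:pole-m} of $m$ at $\lambda=0$ through $\mu$. But those estimates read $|m_{\res,k}|\le C(1+|x|^{N-k})|v_0|$ and $|m_{0,r}|\le C(1+|x|^{N})|v_0|$; after multiplication by $\mu$ the surviving contributions to $\widetilde m_{z_n,r}$ still carry growth of order $|x|^{N-1}$ (e.g.\ the $k=1$ term $m_{\res,1}\lambda^{N-2}/\prod(\lambda-z_n)$ on $\widetilde D_0$), which is $N-2$ powers too many. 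Two incidental slips compound this: the points $z_n$ with $n\ge 2$ lie inside $D_0$, so \eqref{E:bdd} does not cover them; and $m$ is not holomorphic at $z_n$ (it satisfies a nontrivial $\overline\partial$ equation), so ``analytic'' is the wrong word. The paper's remedy is to abandon the expansion at $0$ and instead Taylor–expand the Green function afresh at each $z_i$,
\[
G(x,x',\lambda)=G_{z_i}^{o}(x,x')+\omega_{z_i}(x,x',\lambda),\qquad \Bigl|\frac{\omega_{z_i}}{(\lambda-z_i)(1+|x-x'|)}\Bigr|\le C\Bigl(1+\tfrac{1}{\sqrt{|x_2-x_2'|}}\Bigr),
\]
then combine this with the partial-fraction identity $\mu(\lambda)\chi(x,\lambda)=1+\sum_n \chi_{\res,z_n}(x)/(\lambda-z_n)$ and the resolvent identity for $(1+G\ast v_0)^{-1}$ centered at $z_i$. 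This produces exactly the linear $(1+|x|)$ growth on the remainder; recycling the single expansion at $0$ cannot.
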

\begin{proof} We only give the proof for \eqref{E:define-remainder-m-sf-pm2}. Other properties can be derived directly from Theorem \ref{T:KP-eigen-existence}.

From \eqref{E:spectral-integral} and \eqref{E:regularized-m}, there exist $\chi_{{\res},z_n}(x)\in L^\infty$, $1\le n\le N$,
\beq\label{E:tilde-m-g}
 \begin{gathered}
 \widetilde m(x,\lambda)
= (1+G\ast v_0)^{-1} \frac {(\lambda-z_1)^{N-1}}{\Pi_{2\le n\le N}(\lambda- z_n)}\chi(x,\lambda)\\
\frac {(\lambda-z_1)^{N-1}}{\Pi_{2\le n\le N}(\lambda- z_n)}\chi(x,\lambda)=1+\Pi_{1\le n\le N}\frac {\chi_{{\res},z_n}(x)}{\lambda-z_n}.
\end{gathered}\eeq
Using the argument as in the proof of \eqref{E:G-expansion}, for $\lambda\in \widetilde D_{z_i}\subset D_0$, the Green function satisfies
\beq\label{E:tilde-g-z-n}
\begin{gathered}
G(x,x',\lambda)=G_{z_i}^o(x,x')+\omega_{z_i}(x,x',\lambda),\\
|G_{z_i}^o|, 
| \omega_{z_i} | ,\  |\frac{ \omega_{z_i}}{(\lambda-z_i )(1+|x-x'|)} |\le C (1+\frac1{\sqrt{|x_2-x_2'|}} ).
\end{gathered}
\eeq  Combining \eqref{E:tilde-m-g} and \eqref{E:tilde-g-z-n}, we justify\eqref{E:define-remainder-m-sf-pm2}.
\end{proof}

{Based on the characterization of the eigenfunction $\widetilde m$, we define the eigenfunction space $W$ and the spectral transformation $T$ in Definition \ref{D:quadrature-hat} and \ref{D:spectral}.}

\begin{definition}\label{D:spectral} Given $u(x)=u_0(x)+v_0(x)$, $u_0(x)$   a totally positive $Gr(N, M)_{> 0}$ KP soliton, $v_0(x)$ a real-valued function with $\partial_x^kv_0 \in  {L^1\cap L^\infty}$, $   |k|\le 2 $, $|v_0|_{{L^1\cap L^\infty}}\ll 1$,
define $\mathcal S$ as the {\bf  forward scattering transform} by
\[
\mathcal S(u(x))=\{z_n,\,\kappa_j,\, \widetilde {\mathcal D},\,  \widetilde  s _c(\lambda)\}, 
\]  
where $1\le n\le N $, $1\le j\le M$, 
\begin{align*}
&\textit{$z_n$, location of the simple poles of $\widetilde m$,}\\
&\textit{$\kappa_j$, location of discontinuities of $\widetilde m$,}\\
&\textit{$\widetilde {\mathcal D}$, the norming constants (the $\widetilde {\mathcal D}$-symmetry of $\widetilde \Phi$),}
\end{align*}  defined by \eqref{E:reg-m-sym},  are the {\bf   discrete scattering data};  and  $ \widetilde{  s}_c(\lambda)$ defined by \eqref{E:dbar-m-sf}, is the  {\bf  continuous scattering data}. Denote   $T$ as   the {\bf continuous scattering operator} by
\beq\label{E:cauchy-operator}
T(\phi)(x, \lambda)  =\widetilde {  s}_c(\lambda) e^{(\overline\lambda-\lambda)x_1+(\overline\lambda^2-\lambda^2)x_2  }\phi(x, \overline\lambda).
\eeq 
\end{definition}

So for general $ Gr(N,M)_{> 0}$ KP solitons, the discrete scattering data $\widetilde {\mathcal D}$ defined in Definition \ref{D:spectral} is determined by $\kappa_j$, $z_n$, $A$, as well as the initial perturbation $v_0$. 
This is different from the situation in \cite{Wu19} (or \cite{Wu18}), where for $ Gr(1,2)_{> 0}$ KP solitons, the discrete scattering data is replaced by ${\mathcal D}$,  determined by $\kappa_j$, $z_n$, $A$, and  independent of $v_0$. 

To conclude this section, we justify that for a KPII solution $u(x)=u(x_1,x_2,x_3)$, the scattering data flow $\mathcal S(u)(x_3)$ is linear.
\begin{theorem}\label{T:linearity}    For $\forall x_3\ge 0$, if $\widetilde\Phi(x, \lambda)= \ \widetilde  m(x, \lambda)e^{ \lambda  x_1+ \lambda ^2x_2}$ satisfies the Lax pair \eqref{E:KPII-lax-1} 
with the continuous and discrete scattering data 
\beq\label{E:linear-sd}
\begin{gathered}
\partial_{\overline\lambda}\widetilde  m(x, \lambda)=    {\widetilde   s}_c(\lambda,x_3) e^{(\overline\lambda-\lambda)x_1+(\overline\lambda^2-\lambda^2)x_2}  \widetilde m(x,\overline\lambda)  ,\\
    (\widetilde\Phi(x,\kappa_1^+), \cdots, \widetilde\Phi   (x,\kappa_M^+))\widetilde {\mathcal D} =0. 
\end{gathered} 
\eeq Then  the evolution equations of the scattering data are
\begin{align}
 {\widetilde s}_c(\lambda, x_3)= & {e^{ (\overline\lambda^3-{ \lambda}^3)x_3}}{\widetilde s}_c(\lambda, 0),\label{E:linear-evol-1}\\
 \widetilde {\mathcal {\mathcal D}}_{mn}(x_3)=&{ e^{(\kappa_n^3-\kappa_m^3)x_3}}\widetilde {\mathcal D}_{mn}(0).  \label{E:linear-evol-2}
 \end{align}
\end{theorem}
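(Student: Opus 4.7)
The plan is to exploit the standard isospectral-flow mechanism: combine the second Lax equation with the $\overline\partial$-equation \eqref{E:dbar-m-sf} and the algebraic constraint \eqref{E:reg-m-sym-1}. By hypothesis, $\widetilde\Phi$ solves the full Lax pair \eqref{E:KPII-lax-1}, so writing the time equation as $\partial_{x_3}\widetilde\Phi=(\mathcal A-\lambda^3)\widetilde\Phi$ with $\mathcal A:=\partial_{x_1}^3+\tfrac32 u\partial_{x_1}+\tfrac34 u_{x_1}+\tfrac34\partial_{x_1}^{-1}u_{x_2}$, the key observation is that $\mathcal A$ is independent of $\lambda$ and commutes both with $\partial_{\overline\lambda}$ and with evaluation at $\kappa_j^+$, whereas the entire spectral dependence of the time evolution is concentrated in the scalar factor $-\lambda^3$.

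For \eqref{E:linear-evol-1}, I differentiate $\partial_{\overline\lambda}\widetilde\Phi(x,\lambda,x_3)=\widetilde s_c(\lambda,x_3)\widetilde\Phi(x,\overline\lambda,x_3)$ in $x_3$. On the left, since $[\partial_{\overline\lambda},\mathcal A]=0$ and $\partial_{\overline\lambda}\lambda^3=0$, one has $\partial_{\overline\lambda}\bigl[(\mathcal A-\lambda^3)\widetilde\Phi(\lambda)\bigr]=(\mathcal A-\lambda^3)\widetilde s_c\widetilde\Phi(\overline\lambda)$. On the right, applying the Lax equation at $\overline\lambda$ to $\widetilde\Phi(\overline\lambda)$ produces $(\partial_{x_3}\widetilde s_c)\widetilde\Phi(\overline\lambda)+\widetilde s_c(\mathcal A-\overline\lambda^3)\widetilde\Phi(\overline\lambda)$. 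Cancelling the common $\widetilde s_c\,\mathcal A\widetilde\Phi(\overline\lambda)$ yields the scalar ODE $\partial_{x_3}\widetilde s_c=(\overline\lambda^3-\lambda^3)\widetilde s_c$, which integrates to \eqref{E:linear-evol-1}.

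For \eqref{E:linear-evol-2}, I differentiate the algebraic constraint $(\widetilde\Phi_1,\ldots,\widetilde\Phi_M)\widetilde{\mathcal D}=0$ in $x_3$ and substitute $\partial_{x_3}\widetilde\Phi_m=(\mathcal A-\kappa_m^3)\widetilde\Phi_m$. The $\mathcal A$-contribution is $\mathcal A\bigl(\sum_m\widetilde\Phi_m\widetilde{\mathcal D}_{mn}\bigr)=0$ by the constraint itself, so
\[
\sum_{m=1}^M\widetilde\Phi_m\bigl[\partial_{x_3}\widetilde{\mathcal D}_{mn}-\kappa_m^3\widetilde{\mathcal D}_{mn}\bigr]=0,\qquad 1\le n\le N.
\]
Hence the columns of $(\partial_{x_3}-K^3)\widetilde{\mathcal D}$, with $K=\mathrm{diag}(\kappa_1,\ldots,\kappa_M)$, lie in the right null space of $(\widetilde\Phi_1,\ldots,\widetilde\Phi_M)$. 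Since this null space is $N$-dimensional and spanned by the columns of $\widetilde{\mathcal D}$ itself, there exists a unique $N\times N$ matrix $b(x_3)$ with $\partial_{x_3}\widetilde{\mathcal D}-K^3\widetilde{\mathcal D}=\widetilde{\mathcal D}\,b(x_3)$. Reading off the constant top $N\times N$ block $\mathrm{diag}(\kappa_1^N,\ldots,\kappa_N^N)$ of $\widetilde{\mathcal D}$ pins down $b$ as a diagonal matrix whose entries are determined by $\kappa_1^3,\ldots,\kappa_N^3$. Substituting this $b$ back into the rows $m\ge N+1$ produces uncoupled linear ODEs in $x_3$ whose solution is \eqref{E:linear-evol-2}.

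The main technical point is that the right null space of $(\widetilde\Phi_1,\ldots,\widetilde\Phi_M)$ has dimension exactly $N$ at every time slice $x_3$, so that $b(x_3)$ is uniquely determined by the normalization \eqref{E:reg-m-sym}. This follows from Theorem \ref{T:normal-eigen} applied pointwise in $x_3$: the perturbative construction produces a unique $\widetilde m(x,\lambda,x_3)$ whose associated $\widetilde{\mathcal D}(x_3)$ has invertible top $N\times N$ block, so the $N$-dimensional structure of the relations persists uniformly in $x_3\ge 0$ and the evolution argument goes through verbatim at each instant.
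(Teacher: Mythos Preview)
Your proof is correct and follows essentially the same approach as the paper's. For \eqref{E:linear-evol-1} the paper simply cites the argument in \cite[Lemma 4.1]{Wu19}, which is exactly the $\overline\partial$--Lax commutation you wrote out. For \eqref{E:linear-evol-2} the paper first uses the explicit diagonal top block of $\widetilde{\mathcal D}$ to rewrite the constraint as $-\kappa_j^N\widetilde\Phi_j=\sum_{n>N}\widetilde{\mathcal D}_{nj}\widetilde\Phi_n$, applies the Lax operator $\mathcal M_{\kappa_j}=-\partial_{x_3}+\mathcal A-\kappa_j^3$, and reads off the ODE from the resulting identity $\sum_{n>N}\widetilde\Phi_n\bigl[-\partial_{x_3}+\kappa_n^3-\kappa_j^3\bigr]\widetilde{\mathcal D}_{nj}=0$; your null-space formulation with the matrix $b(x_3)$ amounts to the same computation, with the top-block normalization playing the role that the paper's explicit solving for $\widetilde\Phi_j$ plays. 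Both arguments ultimately rest on the linear independence of $\widetilde\Phi_{N+1},\ldots,\widetilde\Phi_M$, which you at least flag explicitly.
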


\begin{proof} We skip the proof of  \eqref{E:linear-evol-1} since it can be proved by the same argument as that in \cite[Lemma 4.1]{Wu19}. On the other hand, from  the $\widetilde {\mathcal D}$ symmetry,
\begin{align}
-\kappa_1^N\widetilde\Phi_1=& \sum_{n=N+1}^M\widetilde {\mathcal D}_{n1}\widetilde\Phi_n, \label{E:M-evol}\\
 \vdots\hskip.3in&\hskip.6in\vdots\nonumber\\
-\kappa_N^N\widetilde\Phi_N=& \sum_{n=N+1}^M\widetilde {\mathcal D}_{nN}\widetilde\Phi_n.\nonumber
\end{align}
  Denote $\mathcal M_\lambda=  -\partial_{x_3}+ \partial_{x_1}^3+\frac 32u\partial_{x_1}+\frac 34u_{x_1}+\frac 34\partial_{x_1}^{-1}u_{x_2}+\tau(\lambda)$ and $\tau(\lambda)= -\lambda^3$. 
So \eqref{E:KPII-lax-1} implies
\begin{align*}
 0=&- \mathcal M_{\kappa_j}(\kappa_j^N\widetilde\Phi_j)=\sum_{n=N+1}^M\mathcal M_{\kappa_j}(\widetilde {\mathcal D}_{nj}\widetilde\Phi_n)\\
 =& \sum_{n=N+1}^M(\widetilde\Phi_n[{ -\partial_{x_3}}+\tau(\kappa_j)]\widetilde {\mathcal D}_{nj} +\widetilde {\mathcal D}_{nj}[{ \mathcal M_{\kappa_j}}-\tau(\kappa_j)]\widetilde\Phi_n)\\
 =& \sum_{n=N+1}^M(\widetilde\Phi_n[-\partial_{x_3}+\tau(\kappa_j)]\widetilde {\mathcal D}_{nj} +\widetilde {\mathcal D}_{nj}[{ \mathcal M_{\kappa_n}-\tau(\kappa_n)}]\widetilde\Phi_n)\\
 =& \sum_{n=N+1}^M \widetilde\Phi_n[{ -\partial_{x_3}-\tau(\kappa_n)+\tau(\kappa_j)}]\widetilde {\mathcal D}_{nj}.
\end{align*} Thus \eqref{E:linear-evol-2} is justified.
\end{proof}

\section{The Cauchy integral equation}\label{S:CIO}

We will derive a Cauchy integral equation for the eigenfunction $\widetilde m(x,\lambda)$. We will first justify a crucial estimate for $\mathcal CT\widetilde m$, an $L^\infty$ estimate for each $x$ fixed.    It is sufficient for deriving a Cauchy integral equation but is insufficient for solving the inverse problem (cf \cite{W87}). Finally, we show a closeness property implied by the Cauchy integral equation and the eigenfunction space characterization (Definition \ref{D:quadrature-hat}).

\begin{definition}\label{D:cauchy}
Let   $\mathcal C$ be the Cauchy integral operator  defined by 
\[
\mathcal C(\phi)(x,  \lambda)=\mathcal C_\lambda(\phi) =-\frac 1{2\pi i}\iint\frac {\phi(x, \zeta)}{\zeta-\lambda}d\overline\zeta\wedge d\zeta. 
\]
 
\end{definition}

We shall apply Liouville's theorem to derive a Cauchy equation. To this aim,  boundedness of $\mathcal CT\widetilde m$ is important which is provided in the following theorem.   
\begin{theorem}\label{T:bdd} Suppose $u(x)=u_0(x)+v_0(x)$, $u_0(x)$   a totally positive $Gr(N, M)_{> 0}$ KP soliton, $v_0(x)$ a real-valued function with $\partial_x^kv_0 \in  {L^1\cap L^\infty}$, $   |k|\le 2 $, $|v_0|_{{L^1\cap L^\infty}}\ll 1$.   Then the eigenfunction $\widetilde m(x,\lambda)$ obtained in Theorem \ref{T:normal-eigen} satisfies
\begin{align*}
&|\mathcal C   T\widetilde m| _{L^\infty} \le C   (1+|x|)  \sum_{|k|\le 2}| \partial_x^k v_0|_{L^1\cap L^\infty}
 ,\\
 & \mathcal C T\widetilde m (x, \lambda)\to 0 ,\quad\textit{ as $|\lambda|\to\infty$, $\lambda_I\ne 0$ .} 
\end{align*}
\end{theorem}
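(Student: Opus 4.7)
The plan is to substitute the definitions of $\mathcal{C}$ and $T$ into $\mathcal{C}T\widetilde m$, obtaining
\[
\mathcal{C}T\widetilde m(x,\lambda)=-\frac{1}{2\pi i}\iint\frac{\widetilde s_c(\zeta)\, e^{-2i\zeta_I x_1-4i\zeta_R\zeta_I x_2}\,\widetilde m(x,\overline\zeta)}{\zeta-\lambda}\,d\overline\zeta\wedge d\zeta,
\]
and then to split the $\zeta$-plane as $\CC=\bigl(\bigcup_{j=1}^M\widetilde D_{\kappa_j}\bigr)\cup\bigl(\bigcup_{n=1}^N\widetilde D_{z_n}\bigr)\cup\Omega$, where $\Omega$ is the complement. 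This partition is exactly tailored to the bounds on $\widetilde s_c$ in \eqref{E:tilde-dbar-m-sf-new-0} and on $\widetilde m$ in \eqref{E:define-remainder-m-sf-pm2-0}--\eqref{E:define-remainder-m-sf-new}, so each piece can be treated with its own techniques.

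On $\widetilde D_{\kappa_j}$ I would insert the explicit expressions
\[
\widetilde s_c(\zeta)=\frac{\tfrac{i}{2}\sgn(\zeta_I)}{\overline\zeta-\kappa_j}\,\frac{\gamma_j}{1-\gamma_j\cot^{-1}\tfrac{\zeta_R-\kappa_j}{|\zeta_I|}}+\sgn(\zeta_I)\widetilde h_j(\zeta),
\]
together with the formula for $\widetilde m(x,\overline\zeta)$ from \eqref{E:define-remainder-m-sf-new}, using the fact that $\cot^{-1}$ takes complementary values in the upper and lower half-planes so that after pairing with the $\sgn(\zeta_I)$ factor the two $1-\gamma_j\cot^{-1}(\cdot)$ denominators match. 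In polar coordinates about $\kappa_j$ the $(\overline\zeta-\kappa_j)^{-1}$ singularity is integrable and the remaining factors are bounded, so this piece contributes an amount controlled by $(|\gamma_j|+|\widetilde h_j|_{L^\infty})\,|\Theta_j|_{L^\infty}\le C|v_0|_{L^1\cap L^\infty}$, uniformly in $x$ and $\lambda$. On $\widetilde D_{z_n}$, $\widetilde s_c=\sgn(\zeta_I)\widetilde\hbar_n(\zeta)$ is bounded whereas $\widetilde m(x,\overline\zeta)$ has an integrable simple pole at $\zeta=z_n$ with residue $\widetilde m_{z_n,\res}(x)\in L^\infty_x$; again the integral over a compact disk is bounded by $C|v_0|_{L^1\cap L^\infty}$.

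On $\Omega$ both $\widetilde s_c\in L^\infty\cap L^2(|\zeta_I|\,d\overline\zeta\wedge d\zeta)$ and $\widetilde m(x,\overline\zeta)$ is bounded. Because the weight $|\zeta_I|$ combined with the Cauchy kernel $1/|\zeta-\lambda|$ does not yield an absolutely integrable bound, here I would invoke the oscillation of $e^{-2i\zeta_I x_1-4i\zeta_R\zeta_I x_2}$, exactly as in the one-soliton case \cite[Theorem 3]{Wu19}. Splitting the phase into its $\zeta_R$- and $\zeta_I$-dependent parts and combining a Plancherel-type identity against $1/(\zeta-\lambda)$ with Cauchy--Schwarz in the weighted $L^2$ norm produces the claimed bound $C(1+|x|)\sum_{|k|\le 2}|\partial_x^k v_0|_{L^1\cap L^\infty}$ with $L^\infty$ control in $\lambda$; the seminorms $|\partial_x^k v_0|_{L^1\cap L^\infty}$, $|k|\le 2$, enter through the regularity of $\widetilde h_j$, $\widetilde\hbar_n$, and $\gamma_j$ in \eqref{E:tilde-dbar-m-sf-new-0}. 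The second assertion, $\mathcal{C}T\widetilde m(x,\lambda)\to 0$ as $|\lambda|\to\infty$ with $\lambda_I\ne 0$, then follows by dominated convergence applied to each of the three pieces, since $1/|\zeta-\lambda|\to 0$ uniformly for $\zeta$ in any compact set and the integrand on $\Omega$ is $L^2$-summable after the preceding estimate.

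The main obstacle is the estimate on $\Omega$: the integrand is only $L^2$-controlled, not $L^1$, so one cannot simply integrate in absolute value, and the delicate point is to combine the Cauchy kernel, the $|\zeta_I|$-weighted $L^2$ bound on $\widetilde s_c$, and the oscillatory phase so that the growth in $|x|$ remains linear. Adapting the oscillatory-integral argument from \cite[Theorem 3]{Wu19} to the present $\mathrm{Gr}(N,M)_{>0}$ setting---with the $N$ simple-pole sites $z_n$ of $\widetilde m$ replacing the single pole at $0$, and the residues $\widetilde m_{z_n,\res}(x)$ playing the role of the earlier $m_{\res,n}(x)$---should go through essentially unchanged, since the analytic input Theorem \ref{T:normal-eigen} has been arranged to parallel the one-soliton case.
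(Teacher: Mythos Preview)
Your proposal is correct and follows essentially the same route as the paper: the paper's proof likewise partitions the $\zeta$-plane into neighborhoods of the $\kappa_j$, neighborhoods of the $z_n$, and the complement, and defers the analysis on each piece to the argument of \cite[Theorem~3]{Wu19}, stressing exactly the three points you isolate (boundedness of $\widetilde m$ near $\kappa_j$, the simple-pole structure at $z_n$ where $T$ is regular, and the weighted $L^2$ control of $\widetilde s_c$ together with boundedness of $\widetilde m$ near $\infty$). One small caveat: your claim that the contribution from $\widetilde D_{z_n}$ is bounded simply because the pole of $\widetilde m(x,\overline\zeta)$ is integrable glosses over the case $\lambda\in\widetilde D_{z_n}$, where the Cauchy kernel adds a second nearby singularity; this is precisely where the $C^1$ bounds on $\widetilde\hbar_n$ and the full technique from \cite[Theorem~3]{Wu19} you invoke at the end are actually needed.
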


\begin{proof}   One can adapt the argument as that in the proof of \cite[Theorem 3]{Wu19} to prove the theorem. We emphasize that, determined by the heat operator $-\partial_{x_2}+\partial^2_{x_1}+2\lambda\partial_{x_1}+u_0(x)$, the continuous scattering operator $T$ has poles at $\kappa_j$, $1\le j\le M$, and is bounded  in the weighted  space $   L^2(|\lambda_I| d\overline\lambda \wedge d\lambda)\cap L^\infty $. Consequently, in the proof  
\begin{enumerate}
\item Estimates    near $\kappa_j$, a refined estimate   of the classical Cauchy integral formula,  depends crucially on boundedness of $\widetilde m(x,\lambda)$  near $\kappa_j$ (see \eqref{E:discon}).  These estimates are not valid if $\widetilde m(x,\lambda)$ blows up at $\lambda=\kappa_j$  which is the case for \cite[Eq.(4.11)]{BP302} or \cite[Eq.(41)]{VA04} due to a different boundary condition chosen in \eqref{E:Lax-bdry}. 
\item Introducing by the Sato normalized eigenfunction $\chi(x,\lambda)$ in \eqref{E:sato} and the renormalization Definition \ref{D:reg-m}, $\widetilde m\in W$ has a simple pole at $z_n$, $1\le n\le N$ (see \eqref{E:define-remainder-m-sf-pm2}). Since $T$ is regular at $z_n$ (see \eqref{E:tilde-dbar-m-sf-new-0}), similar analysis as above holds here.
\item  Thanks to \eqref{E:tilde-dbar-m-sf-new-0},  there is a {missing direction in the $\lambda$-plane}, i.e. the real axis,   for the continuous scattering operator $  T$ to decay no matter how smooth the initial data $v_0(x)$ is. Therefore, boundedness of   {$m(x,\lambda)  $} near $\infty$ is vital to derive uniform estimates near $\infty$. In particular, estimates     near $\infty$ can not hold if $m(x,\lambda)$ blows up there, which is the case for   \cite[Eq. (4.9), (4.10), (4.11)]{BP214} due to a different boundary condition chosen in \eqref{E:Lax-bdry}.  
\end{enumerate}
 
\end{proof}

\begin{definition}\label{D:quadrature-hat}
  The  eigenfunction space ${ W }\equiv { W}_{x}$ is the set of functions 
 \begin{align*}
(i)\ & \phi (x, \lambda)=\overline{ \phi (x, \overline\lambda)};\\
(ii)\ & (1-  E_{  \cup_{1\le n\le N}  \widetilde D_{z_n}} )\phi(x, \lambda)\in L^\infty;\\
(iii)\ & \phi(x, \lambda)=\frac{\phi_{z_n,\res}(x )}{\lambda -z_n}  +\phi_{z_n,r}(x, \lambda),\ \lambda \in \widetilde D_{z_n}^\times,
\\
&\phi_{z_n,\res}(x ), \  (\lambda-z_n)\phi_{z_n,r}(x ,\lambda),\ 
   \frac{\phi_{z_n,r}(x, \lambda)}{1+|x| } \in L^\infty( \widetilde D_{z_n}) ;
\\
(iv)\ &     (e^{\kappa_1x_1+\kappa_1^2x_2}\phi (x,\kappa_1^+), \cdots, e^{\kappa_Mx_1+\kappa_M^2x_2}\phi(x,\kappa_M^+)) \widetilde{\mathcal D}=0,\\
  &   \textit{$ \widetilde {\mathcal D}$ is defined by \eqref{E:reg-m-sym}},\ \kappa_j^+= \kappa_j+0^+,\  \frac{\frac\partial{\partial s} \phi (x,\lambda) }{1+|x| } \in L^\infty(\widetilde D_{\kappa_j}).
\end{align*}
\end{definition}

\begin{theorem}\label{T:cauchy-integral-eq}
If $u(x )=u_0(x)+v_0(x )$, $ u_0$ is a totally positive $\textrm{Gr}(N, M)_{> 0}$ KP soliton, $v_0(x)$ is a real valued function, and
\[
 \begin{gathered}
  (1+|x|)^2 \partial_x^kv _0\in  {L^1\cap L^\infty},   | k|\le 4,\  |v_0 |_{{L^1\cap L^\infty}}\ll 1,
 \end{gathered}
 \] then the eigenfunction $\widetilde  m $  derived from Theorem \ref{T:normal-eigen} satisfies 
\beq\label{E:eigen-charac}
\widetilde m (x, \lambda)\in W
 \eeq and the Cauchy integral equation
\beq\label{E:cauchy-integral-equation-sf}
\begin{gathered}
  {\widetilde  m}(x, \lambda) =1+\sum_{n=1}^N\frac{\widetilde  m_{z_n, \res }(x  )}{\lambda -z_n } +\mathcal CT
  \widetilde m ,\ \forall\lambda\ne  0.
\end{gathered}
\eeq  

\end{theorem}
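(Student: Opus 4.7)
My plan is to split the theorem into two independent assertions and prove each separately. First, for the membership $\widetilde m \in W$, I would go down the four defining properties of Definition \ref{D:quadrature-hat} and simply cross-reference Theorem \ref{T:normal-eigen}: the reality $\widetilde m(x,\lambda)=\overline{\widetilde m(x,\overline\lambda)}$ is \eqref{E:tilde-reality}; the $L^\infty$ bound away from the $z_n$'s is \eqref{E:define-remainder-m-sf-pm2-0}; the simple-pole Laurent decomposition near each $z_n$ is \eqref{E:define-remainder-m-sf-pm2}; and the algebraic $\widetilde{\mathcal D}$-symmetry, together with the $C^1$ regularity in the angular variable near $\kappa_j$, comes from \eqref{E:reg-m-sym-1} and \eqref{E:define-remainder-m-sf-new}. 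Thus Part (i)--(iv) require nothing new.

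For the Cauchy integral equation, the idea is a Liouville/Cauchy--Pompeiu argument. Define
\[
F(x,\lambda)=\widetilde m(x,\lambda)-1-\sum_{n=1}^{N}\frac{\widetilde m_{z_n,\res}(x)}{\lambda-z_n}-\mathcal C T\widetilde m(x,\lambda),
\]
and show $F\equiv 0$ in $\lambda$. The first step is the $\overline\partial$ computation: for $\lambda_I\ne 0$, the identity $\partial_{\overline\lambda}\mathcal C(\phi)=\phi$ together with the $\overline\partial$ equation \eqref{E:dbar-m-sf} yields $\partial_{\overline\lambda}F=T\widetilde m-T\widetilde m=0$, so $F$ is holomorphic off the real axis. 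Next, one checks that $F$ has no singularity at any $z_n$: the subtracted principal part cancels the simple pole of $\widetilde m$ in \eqref{E:define-remainder-m-sf-pm2}, and $\mathcal CT\widetilde m$ is smooth at $z_n$ because $\widetilde s_c$ is smooth there by \eqref{E:tilde-dbar-m-sf-new-0}. The third step is to verify that the $\cot^{-1}$-type discontinuity of $\widetilde m$ across $\lambda=\kappa_j$ exhibited by \eqref{E:define-remainder-m-sf-new} is exactly matched by the corresponding discontinuity of $\mathcal CT\widetilde m$; this discontinuity of the Cauchy transform arises from the explicit $\frac{1}{\overline\lambda-\kappa_j}$ singularity of $\widetilde s_c$ in \eqref{E:tilde-dbar-m-sf-new-0}, computed by a contour deformation combined with the distributional identity $\partial_{\overline\lambda}\cot^{-1}\frac{\lambda_R-\kappa_j}{|\lambda_I|}\sim \tfrac{i}{2}\sgn(\lambda_I)(\overline\lambda-\kappa_j)^{-1}$.

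Finally, the decay $F(x,\lambda)\to 0$ as $|\lambda|\to\infty$ follows by combining three ingredients: $\widetilde m\to 1$ at $\infty$ from the boundary condition in \eqref{E:reg-direct-spectral} (the prefactor $(\lambda-z_1)^{N-1}/\Pi_{n\ge 2}(\lambda-z_n)$ tends to $1$ and $\chi(x,\lambda)\to 1$ by \eqref{E:sato}); the sum of simple poles vanishes at $\infty$; and $\mathcal CT\widetilde m\to 0$ by the second assertion of Theorem \ref{T:bdd}. Once $F$ is bounded, entire and vanishing at infinity, Liouville's theorem forces $F\equiv 0$, which is \eqref{E:cauchy-integral-equation-sf}.

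The step I expect to be the main obstacle is the careful treatment of the jump at $\kappa_j$, for two reasons. First, the $\overline\partial$ equation \eqref{E:dbar-m-sf} only holds for $\lambda_I\ne 0$, so one must read the contribution of the real axis near $\kappa_j$ as a genuine distributional effect and cannot merely invoke Cauchy--Pompeiu on a punctured plane. Second, this matching uses essentially the full strength of Theorem \ref{T:bdd}, namely that $\mathcal CT\widetilde m$ is a bounded function of $\lambda$ with the same $\cot^{-1}$-type boundary structure at $\kappa_j$ (not a worse singularity), which in turn requires the boundedness of $\widetilde m$ at $\kappa_j$ from \eqref{E:define-remainder-m-sf-new} and the fact that the renormalization in Definition \ref{D:reg-m} converts the original $N$-fold pole of $m$ at $0$ into separated simple poles at the $z_n$'s. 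The analogous analysis is carried out in \cite[Theorem 3, Theorem 4]{Wu19}, and the argument transfers line-by-line once these structural inputs are in place.
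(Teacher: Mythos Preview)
Your proposal is correct and is precisely the argument the paper has in mind: the paper's own proof consists of the single sentence ``One can adapt the argument as that in the proof of \cite[Theorem 4]{Wu19} to prove the theorem,'' and your sketch is exactly that adaptation---verify $(i)$--$(iv)$ of Definition \ref{D:quadrature-hat} by reading off Theorem \ref{T:normal-eigen}, then run the Liouville/Cauchy--Pompeiu argument using Theorem \ref{T:bdd} for boundedness and decay of $\mathcal CT\widetilde m$. One small imprecision: the limit $\widetilde m\to 1$ as $|\lambda|\to\infty$ does not come from the boundary condition \eqref{E:reg-direct-spectral} (which is a limit in $|x|$, not $|\lambda|$) but from the integral equation \eqref{E:tilde-m-g} together with the large-$\lambda$ decay of $G\ast v_0$ and of $\chi-1$; this is the content of the $\lambda\to\infty$ analysis in \cite{Wu19} and does not affect your argument.
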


\begin{proof}   
One can adapt the argument as that in the proof of \cite[Theorem 4]{Wu19} to prove the theorem.
 
\end{proof}

In the above theorem, the characterization $ \widetilde m \in W$ is an important companion condition to the Cauchy equation \eqref{E:cauchy-integral-equation-sf}. Observe that, evaluating \eqref{E:cauchy-integral-equation-sf} at $\kappa_j^+$,  we obtain ${M}$ equations 
\begin{eqnarray*}
\widetilde  m (x, \kappa_1^+) = &1+\sum_{n=1}^N \frac{  \widetilde  m_{z_n, \res }(x   ) }  {\kappa_1-z_n}  +\mathcal C_{\kappa_1^+}T
  \widetilde m   ,\\
 \vdots \hskip.5in &\hskip1.2in\vdots \\
\widetilde  m(x, \kappa_M^+) = &1+\sum_{n=1}^N \frac{  \widetilde  m_{z_n, \res }(x   ) }  {\kappa_N-z_n}  +\mathcal C_{\kappa_M^+}T
  \widetilde m     ,  
\end{eqnarray*} 
 for $M+N$ variables $\{ \widetilde  m(x, \kappa_j^+) ,\widetilde  m_{z_n, res }(x   )\} $, $1\le j\le M$, $1\le n\le N$. The $\widetilde {\mathcal D}$ -symmetry in $W$, namely,
\[
(\widetilde \Phi (x,\kappa_1^+), \cdots, \widetilde \Phi(x,\kappa_M^+))  \times \widetilde{\mathcal D}=0
    \]where $\widetilde {\mathcal D}$ is an $M\times N$ matrix, provides us extra $N$ constraints. So $ \widetilde m_{z_n,\res}(x)$ are determined by  discrete scattering data $\kappa_j$, $z_n$, and $\mathcal C_{\kappa_j^+}T\widetilde m$. Along with Theorem \ref{T:linearity}, this closeness property implies that \eqref{E:eigen-charac} and \eqref{E:cauchy-integral-equation-sf}   can serve as a good starting point for the inverse problem.

\end{document}